\newtheorem{theorem}{Theorem}  [section]
\begin{document}
%
\title{Blockchain for Federated Learning: A Comprehensive Survey}



%
%
%

\title{Blockchain and Federated Edge Learning for Privacy-Preserving Mobile Crowdsensing
\author{Qin~Hu, Zhilin Wang, Minghui Xu, and Xiuzhen Cheng, \textit{Fellow, IEEE}}
\IEEEcompsocitemizethanks{
\IEEEcompsocthanksitem Qin Hu and Zhilin Wang are with the Department of Computer and Information Science, Indiana University-Purdue University Indianapolis, USA. \protect \\ E-mail: \{qinhu,wangzhil\}@iu.edu
\IEEEcompsocthanksitem Minghui Xu and Xiuzhen Cheng are with the School of Computer Science and Technology, Shandong University, China. \protect \\ E-mail: \{mhxu,xzcheng\}@sdu.edu.cn
}
}

\maketitle

\begin{abstract}
Mobile crowdsensing (MCS) counting on the mobility of massive workers helps the requestor accomplish various sensing tasks with more flexibility and lower cost. However, for the conventional MCS, the large consumption of communication resources for raw data transmission and high requirements on data storage and computing capability hinder potential requestors with limited resources from using MCS. To facilitate the widespread application of MCS, we propose a novel \textit{MCS learning framework} leveraging on blockchain technology and the new concept of edge intelligence based on federated learning (FL), which involves four major entities, including requestors, blockchain, edge servers and mobile devices as workers. 
Even though there exist several studies on blockchain-based MCS and blockchain-based FL, they cannot solve the essential challenges of MCS with respect to accommodating resource-constrained requestors or deal with the privacy concerns brought by the involvement of requestors and workers in the learning process. To fill the gaps, 
four main procedures, i.e., task publication, data sensing and submission, learning to return final results, and payment settlement and allocation, are designed to address major challenges brought by both internal and external threats, such as malicious edge servers and dishonest requestors. 
Specifically, a mechanism design based data submission rule is proposed to guarantee the data privacy of mobile devices being truthfully preserved at edge servers; consortium blockchain based FL is elaborated to secure the distributed learning process; and a cooperation-enforcing control strategy is devised to elicit full payment from the requestor.  
Extensive simulations are carried out to evaluate the performance of  our designed schemes.
\end{abstract}

\begin{IEEEkeywords}
Mobile crowdsensing, federated learning, data privacy, blockchain, game theory.
\end{IEEEkeywords}

%
\IEEEpeerreviewmaketitle

\maketitle

\section{Introduction}
%
%
%
%
\IEEEPARstart{F}{acilitated} by a variety of embedded sensors on mobile devices and ubiquitous Internet access opportunities, mobile crowdsensing (MCS) evolves into a vigorous paradigm \cite{nguyen2021federated}, benefiting the data collection of Internet of Things (IoT) and various practical applications, such as transportation monitoring \cite{farkas2015crowdsending}, localization and navigation  \cite{wang2018crowdnavi}. There are also an increasing number of mobile apps based on MCS, providing great convenience for our daily lives, such as Waze and Uber. In general, the success of MCS lies in the solicitation of distributed sensing capabilities of mobile devices across a large-scale area to serve for a major task in an environmentally friendly way, which can significantly reduce the cost of deploying traditional sensors in a large number of fixed locations.

Currently, mainstream research on MCS put efforts on incentive mechanisms design \cite{hu2019solving,wang2017efficient,wang2020towards}, quality control \cite{hu2019quality}, and privacy preservation \cite{zhao2020pace}, 
lacking attention on huge communication resource consumption for raw data transmission. Moreover, the large chunk of data collected from mobile workers put certain requirements on the data storage, processing and analysis capabilities of MCS requestors, which can prevent resource-constrained users from exploring and experiencing this promising computing model, thus hindering the wider application of MCS. 

To overcome these challenges, we propose a novel \textit{MCS learning framework} for the first time, leveraging on blockchain technology and the new concept of edge intelligence based on federated learning (FL). Four parts are involved in this framework in a top-down order, namely requestors, blockchain system, edge servers and mobile devices as workers, 
where the blockchain is maintained by a group of consortium members specified at the initialization of the MCS learning system while all other entities access to the blockchain via a client application. 

As a matter of fact, there exist plentiful studies about blockchain-based MCS \cite{wei2019blockchain,huang2020blockchain,hu2020blockchain,wei2020blockchain,kadadha2020sensechain,liang2020faircs,cai2019towards,an2018crowdsensing,an2020lightweight,chatzopoulos2018privacy,wang2018blockchain,gao2021trustworker,xu2021urim,zou2019crowdblps,yang2019blockchain,zhang2019vesenchain,wang2020blockchain,feng2019dynamic,noshad2019blockchain,gu2019using,jia2018blockchain,zhao2019dynamic}
and blockchain-based FL \cite{bao2019flchain,majeed2019flchain,kumar2021blockchain,qi2021privacy,kim2019blockchained,pokhrel2020decentralized,pokhrel2020federated,toyoda2019mechanism,weng2019deepchain,lu2019blockchain,kim2019blockchain,preuveneers2018chained,lu2020blockchain,kang2019incentive,awan2019poster,desai2021blockfla}.
Each can be further classified into tightly-coupled and loosely-coupled frameworks, where the tightly-coupled one employs blockchain to replace the originally centralized MCS platform or parameter aggregator in FL so as to achieve full decentralization and avoid the single point of failure, while the loosely-coupled framework implements the blockchain as an independent module to serve for some specific functions, such as the reputation management of MCS workers and FL participants. Our proposed framework compactly integrates blockchain into both MCS and FL, so the aforementioned existing schemes are not applicable to MCS learning due to the following two reasons. 
On the one hand, blockchain-based MCS still suffers from the huge cost of data transmission and storage, which becomes even worse by using the  blockchain since every blockchain node is supposed to hold a copy of the complete data about the main chain; besides, the requirement on data processing capability of the requestor continues to be a difficulty for resource-limited users. 
On the other hand, blockchain-based FL cannot be directly employed here, either, since the additionally involved requestors and mobile workers 
can bring new challenges to privacy protection in FL.

To fill the gaps, we design main procedures for our proposed MCS learning system, including i) task publication, ii) data sensing and submission, iii) learning to return final results, and iv)  payment settlement and allocation. The first procedure is initiated by the requestor sending an MCS learning request and then finished via a smart contract stored on the blockchain for worker recruitment. 
The second step is executed by the mobile devices and edge servers where devices undertake the sensing jobs according to the requirement of the requestor published on the blockchain and upload sensing data to the nearest edge servers for rewards. Edge servers can thus collect sensing data and establish local datasets for the next learning stage. FL is implemented in the third procedure where edge servers utilize their local data to collaboratively train the assigned machine learning (ML) model without leaking the raw data of mobile devices. And the final step is executed between the requestor and the blockchain once the task requirement is satisfied, where the blockchain system aims to successfully charge the payment from the requestor so that the fair distribution of rewards to all participants can be achieved for motivating their continuous contribution in the future.

However, there exist several challenges in the above-mentioned procedures. 
First, mobile devices 
may seriously concern whether the submitted sensing data to the edge servers can be protected from leakage; second, 
the conventional FL is vulnerable to malicious attacks on the centralized aggregator; third, 
the requestor might be reluctant to compensate the cost of all participants throughout the whole MCS learning process. Via solving these challenges, we make the following contributions:
\begin{itemize}
\item A novel and holistic MCS learning framework is proposed to conduct MCS and the subsequent data analysis in an integrated manner, which can not only lower the communication consumption and the 
requirement of computing and storage capabilities for requestors, but also preserve the privacy of workers at the local edge server level.
\item To guarantee that the submitted data of mobile devices are preserved at edge servers without leaking to others, we resort to mechanism design theory for devising an incentive-compatible game rule to restrict the privacy disclosing behaviors of edge servers.
\item For securing the learning process, consortium blockchain based FL is deployed, where the distributed ledger maintained by consortium nodes undertakes the coordination job and edge servers with local sensing data participate in the collaborative learning. 
\item To fully elicit compensation from the requestor, we employ the zero-determinant (ZD) game theory to work out a cooperation-enforcing control scheme, which is both theoretically and experimentally proved to be effective in driving requestors to pay fully, 
thus safeguarding the interests of all participants in the MCS learning system.
\end{itemize}


The remaining of this paper is organized as follows. We summarize the most related work in Section \ref{sec:relate} and present the overall system model in Section \ref{sec:system}. Detailed designs of main procedures are elaborated in Section \ref{sec:design}, followed by experimental evaluations on specifically designed schemes in Section \ref{sec:evaluation}. Finally, we conclude the whole paper in Section \ref{sec:conclusion}.

\section{Related Work}
\label{sec:relate}
Since the proposed MCS learning framework is largely unexplored, here we mainly summarize the most related work about blockchain-based MCS
\cite{wei2019blockchain,huang2020blockchain,hu2020blockchain,wei2020blockchain,kadadha2020sensechain,liang2020faircs,cai2019towards,an2018crowdsensing,an2020lightweight,chatzopoulos2018privacy,wang2018blockchain,gao2021trustworker,xu2021urim,zou2019crowdblps,yang2019blockchain,zhang2019vesenchain,wang2020blockchain,feng2019dynamic,noshad2019blockchain,gu2019using,jia2018blockchain,zhao2019dynamic}
and blockchain-based FL \cite{bao2019flchain,majeed2019flchain,kumar2021blockchain,qi2021privacy,kim2019blockchained,pokhrel2020decentralized,pokhrel2020federated,lu2019blockchain,kim2019blockchain,toyoda2019mechanism,weng2019deepchain,preuveneers2018chained,lu2020blockchain,kang2019incentive,awan2019poster
,desai2021blockfla}.

\subsection{Blockchain-based Mobile Crowdsensing}
Most of the research on blockchain based MCS deeply embed the blockchain system into MCS framework via employing the blockchain to undertake jobs of the traditional centralized MCS platform, such as task allocation, incentive mechanism implementation, and sensing data verification. In \cite{wei2019blockchain,huang2020blockchain,hu2020blockchain,wei2020blockchain,kadadha2020sensechain}, smart contracts were employed to handle the interactions between the requestors and workers in an automatic manner. Further, Liang \textit{et al.} \cite{liang2020faircs} used Trusted Execution Environments to avoid potential malice from requestors in blockchain-based MCS. 
To guarantee the data privacy of workers, Cai \textit{et al.} \cite{cai2019towards} employed secret sharing technique to enable flexible submission aggregation among blockchain nodes and workers. 
TrustWorker, a trustworthy and privacy-preserving worker selection mechanism for blockchain-assisted crowdsensing, adopted the deterministic encryption method to facilitate worker selection and privacy protection \cite{gao2021trustworker}. In \cite{xu2021urim}, URIM was introduced as an incentive mechanism to maximize the utilities of participants in blockchain-based crowdsensing. 
With more focus on specific performance improvement of MCS, An \textit{et al.} \cite{an2018crowdsensing,an2020lightweight} utilized the blockchain nodes as verifiers to achieve \textit{quality control}, and researchers in \cite{chatzopoulos2018privacy,wang2018blockchain,zou2019crowdblps,yang2019blockchain} took advantage of the distributed blockchain nodes to realize $k$-anonymity for workers so as to protect their \textit{location privacy} in MCS. Blockchain was also implemented in vehicular crowdsensing \cite{zhang2019vesenchain,wang2020blockchain} and radio frequency powered MCS \cite{feng2019dynamic}.

%

Other studies utilizing blockchain for MCS work in a loosely coupled fashion, where the blockchain network is used to facilitate some specific MCS procedure(s) as an independent function module rather than acting as the distributed MCS platform. In \cite{noshad2019blockchain,gu2019using}, blockchain was adopted to evaluate or verify the submissions from workers so as to eliminate their malicious behaviors and resist information tampering. Jia \textit{et al.} \cite{jia2018blockchain} designed a blockchain-powered confusion mechanism to hide the real locations of workers in MCS. While blockchain in  \cite{zhao2019dynamic} was mainly introduced to manage the reputation of workers at the network edge.

\subsection{Blockchain-based Federated Learning}

Blockchain applied in FL mainly aims at achieving fully distributed machine learning without a centralized aggregator. To coordinate learning procedures among all participated clients, blockchain stores all learning related information, such as initial model, local updates, and globally aggregated model. 
As blockchain can be classified into two general types, i.e., public and permissioned, researchers considered utilizing both blockchains in FL. For public chain based FL, FLChain was studied in \cite{bao2019flchain}.
Majeed \textit{et al.} \cite{majeed2019flchain} proposed another FLchain for mobile edge combing enabled FL. The work in \cite{kumar2021blockchain} utilized the public blockchain to verify the data for FL in the healthcare industry. Kim \textit{et al.} \cite{kim2019blockchained} designed BlockFL architecture and  analyzed the end-to-end latency. Pokhrel \textit{et al.} \cite{pokhrel2020decentralized,pokhrel2020federated} utilized the public blockchain with proof-of-work consensus algorithm to facilitate  on-vehicle machine learning.
And mechanism design was involved in \cite{toyoda2019mechanism} to guarantee the honesty of clients in public blockchain based FL platform. 
For permission chain based FL, Weng \textit{et al.} \cite{weng2019deepchain} devised DeepChain to achieve auditability of the whole FL training process with the help of Algorand consensus protocol. 
In \cite{qi2021privacy}, a permissioned blockchained FL system with the differential privacy method was designed to predict the traffic flow. Lu \textit{et al.} \cite{lu2019blockchain} integrated the FL in the consensus of permissioned blockchain and took advantage of it to realize privacy-preserving data sharing for industrial IoT. Dynamic weighting
scheme to improve learning performance was investigated in \cite{kim2019blockchain} for the private blockchain powered FL. And Preuveneers \textit{et al.} \cite{preuveneers2018chained} applied the permissioned blockchain based FL on intrusion detection. 
To adapt to the Internet of Vehicles (IoV) scenario with increasing security and reliability of FL, a hybrid blockchain system consisted of the permissioned chain and the local Directed Acyclic Graph
(DAG) was proposed in \cite{lu2020blockchain}, where a deep reinforcement learning based client selection was also designed to further improve the FL efficiency. Desai \textit{et al.} \cite{desai2021blockfla} designed a hybrid blockchain-based FL framework to automatically prevent the system from being attacked by malicious users via smart contracts.

The blockchain was also employed to operate independently to enhance FL. In \cite{kang2019incentive}, blockchain was employed to realize reputation management for FL clients in a reliable manner. And Awan \textit{et al.} \cite{awan2019poster} utilized the blockchain technique to guarantee provenance of model updates in FL so as to avoid intentional malice from clients.

It is clear that the existing studies are not applicable to MCS learning. 
In detail, blockchain-based MCS still fails to address main challenges of serving resource-limited requestors, while blockchain-based FL cannot be directly adopted since the involvement of requestors and mobile workers may hamper data privacy protection. In this paper, we design specific schemes based on mechanism design and game theory to solve these challenges, thus facilitating the function of the MCS learning system.


\section{System Model}
\label{sec:system}

\begin{figure}[htbp]
\centering
\includegraphics[width=0.49\textwidth]{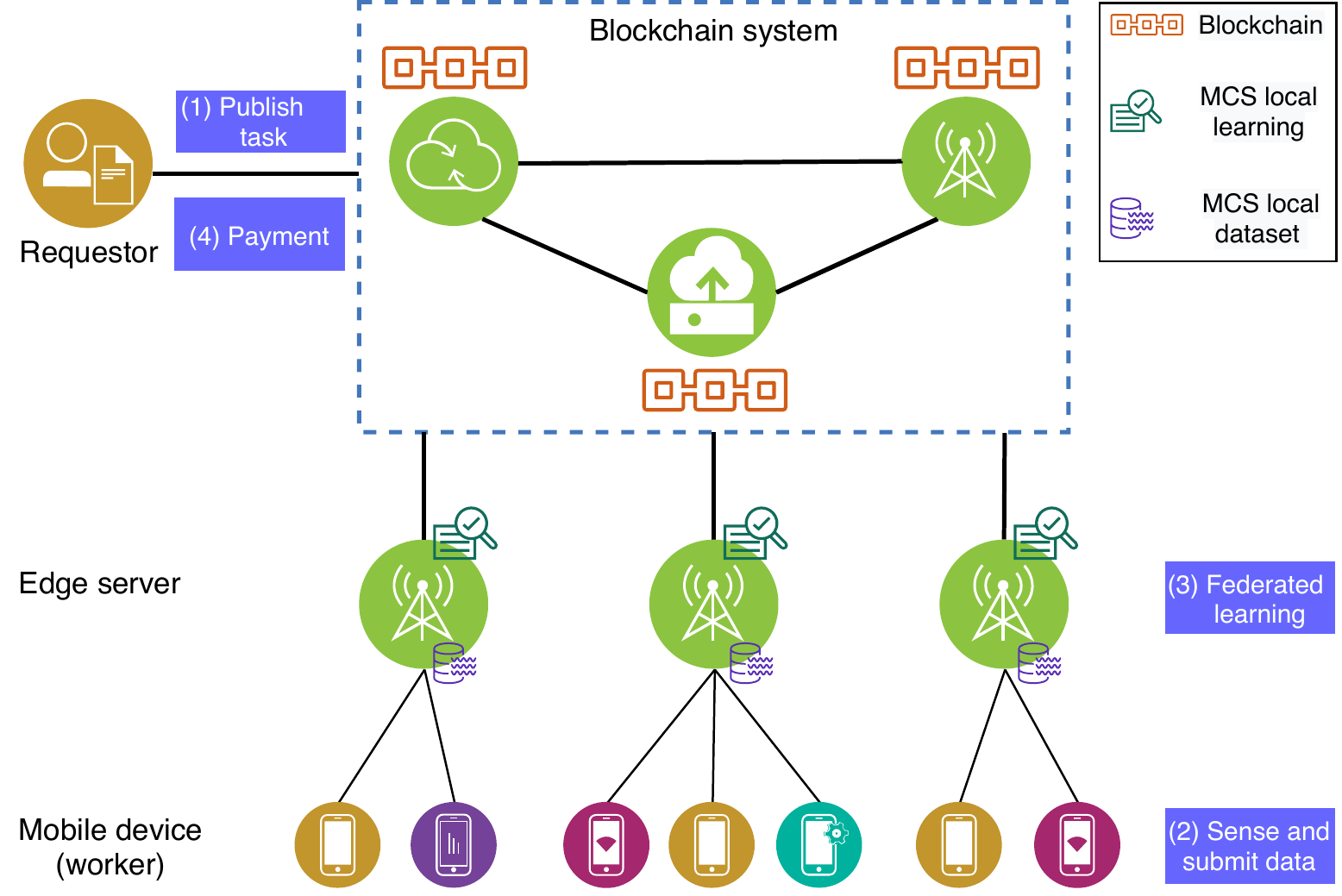}
\caption{The proposed MCS learning framework, consisting of requester, consortium blockchain, edge servers, and mobile devices. After the requester publishes the MCS task to the blockchain network, mobile devices will be recruited as workers to collect data and send them to the nearest edge servers, where the raw data stored at the edge servers will be proceeded via federated learning; appropriate rewards will be distributed to all participates once the learning process finishes.
}
\label{fig:framework}
\end{figure}

With the pervasive wireless network access opportunity, there is an increasing demand for mobile crowdsensing, where mobile devices (e.g., smart phones with embedded sensors) can collect sensing data from various locations all over the world as required by the requestor. On the one hand, the huge amount of sensing data can make the data processing and analysis a big challenge for potential MCS requestors with limited computation resource, hindering the large-scale application of MCS. On the other hand, nowadays, workers in MCS become increasingly concerned about the privacy leakage during the sensing data submission process, discouraging their active participation. To solve these challenges, we propose an MCS learning system with privacy preservation based on federated learning and blockchain, by which any requestor can easily obtain the MCS data analysis results online without worrying about the sequel data processing cost while mobile devices as workers can be assured to participate in data sensing with a certain degree of privacy protection.

As shown in Fig. \ref{fig:framework}, our proposed MCS learning system consists of four main parts, i.e., requestor, blockchain system, edge servers, and mobile devices as workers, where the blockchain is maintained by a predefined group of consortium members while all other entities connect to the blockchain system as clients via an interface. 
To accomplish an MCS learning request, we consider four general procedures as follows:
\begin{enumerate}
\item[(1)] \textbf{Task publication.} The requestor sends an MCS learning request to the blockchain network, which will be recorded on the main chain with necessary task information, such as the requestor's identity, task description, and performance requirement. This on-chain record will trigger a smart contract to recruit appropriate workers.

\item[(2)] \textbf{Data sensing and submission.} After workers receive the MCS task, they collect the required data in their convenience and submit sensing data to the nearest edge server so as to alleviate its own storage burden and accelerate the whole MCS process. All sensing data collected at the edge server constitutes a private dataset, which will be locally processed during the next stage. By this means, the private information of mobile devices (e.g., location) can be preserved to some extent as the raw data will only be visible to their local edge servers rather than the centralized platform in the conventional MCS. 
Besides, workers can receive sensing reward from edge servers in a timely manner. 

\item[(3)] \textbf{Learning to return final results.} To avoid privacy leakage, edge servers with local datasets will collaboratively conduct machine learning based on the federated learning (FL) framework to derive the required model for the requestor. 

\item[(4)] \textbf{Payment settlement and allocation.} Once the FL achieves the performance requirement, the requestor can easily obtain the final learning results from the blockchain system. To motivate continuous contribution from all participants during the whole MCS learning process, an appropriate payment will be charged from the requestor to cover sensing rewards for mobile devices, learning rewards for edge servers, and block rewards for blockchain nodes. 
\end{enumerate}

However, within the above working process, there exist several problems threatening the implementation of the MCS learning system. First, whether the edge server can keep the promise of protecting the data privacy for mobile devices remains an uncertainty; second, the centralized parameter aggregator in FL is prone to be attacked by both inside computation clients and outsiders, leading to the risk of inefficient learning or model breaches; third, given potential malicious requestors, it is challenging to successfully elicit enough payment from the requestor so that all contributors can be rewarded properly. All these concerns need to be carefully considered and well addressed in the detailed design of the MCS learning system, which will be discussed in the next section.

\section{Design of Main Procedures}\label{sec:design}
In this section, we detail the design of our proposed MCS learning system via specifying the aforementioned main steps and solving the aforementioned challenges. For reference, we summarize key notations used in this section in Table \ref{notation}. 

\begin{table}[htbp]
\centering
\caption{Key Notations.}
\begin{tabular}{|c|l|}
\hline
Notation   & Meaning                                                                                                                      \\ \hline
$\theta$    & The data privacy leakage degree of the edge server                                                                           \\ \hline
$v$        & \begin{tabular}[c]{@{}l@{}}The self evaluation of device on the value regarding\\  the sensing data\end{tabular}             \\ \hline
$s$        & The amount of profit paid by the server to the device                                                                        \\ \hline
$U_s$      & \begin{tabular}[c]{@{}l@{}}The expected utility of the edge server with respect to\\  the device's sensing data\end{tabular} \\ \hline
$R$        & \begin{tabular}[c]{@{}l@{}}The amount of reward obtained from the requestor\end{tabular}                \\ \hline
$r(\theta)$ & The extra reward of leaking the sensing data                                                                                 \\ \hline
$g(v,s)$   & \begin{tabular}[c]{@{}l@{}}The successful data collection probability between \\ the device and the edge server\end{tabular} \\ \hline
$U_d$      & The expected utility of the mobile device                                                                                    \\ \hline
$C$        & The overall cost of the device                                                                                               \\ \hline
$W$        & \begin{tabular}[c]{@{}l@{}}The total amount of incentive each requestor needs to pay\end{tabular}                          \\ \hline

$a_r$        & \begin{tabular}[c]{@{}l@{}}The action of the requestor finalizing the ML in blockchain\end{tabular}                          \\ \hline

$a_b$        & \begin{tabular}[c]{@{}l@{}}The action of the leader finalizing the ML in blockchain\end{tabular}                          \\ \hline

$\textbf{p}$        & \begin{tabular}[c]{@{}l@{}}The mixed strategies of the leader\end{tabular}                          \\ \hline

$\textbf{q}$        & \begin{tabular}[c]{@{}l@{}}The mixed strategies of the requestor\end{tabular}                          \\ \hline

\end{tabular}
\label{notation}
\end{table}


\subsection{Task Publication}\label{subsec:task}
Since there exists no centralized MCS platform in our proposed MCS learning system, the task publication step will be accomplished by the decentralized blockchain system. In detail, when the requestor needs to finish an MCS learning task, a request following the predefined format will be sent to the blockchain system as a transaction via the blockchain interface, which can uniquely characterize this request with  task-related information, such as the requestor's identity, task description, and learning performance requirement\footnote{Other information can also be defined to describe an MCS learning request but cannot be fully listed here. For example, the requestor may also submit a test dataset for accuracy evaluation, which can be stored on the blockchain with an address based on InterPlanetary File System (IPFS) \cite{benet2014ipfs}.}. 
Once entering the blockchain network, this request will be verified and widely forwarded to reach most of the consortium members so that it can be efficiently recorded on the main chain. 

After the request is visible on the main chain to all blockchain nodes, the smart contract for worker recruitment will be triggered, which is a piece of program stored on the blockchain with a specific address. To invoke this smart contract, the requestor sets the recipient of the request as the address of the smart contract and then the information included in this request will be processed as input of this worker recruitment program. Note that the logic of this smart contract can directly follow the ideas of existing worker selection and assignment algorithms for MCS \cite{wang2017efficient,wang2020towards}, where the only difference is that via using the smart contract, the worker recruitment process is executed by peering blockchain nodes in a decentralized manner. 

\subsection{Mechanism Design for Data Submission}\label{subsec:mechanism}
After the determination of worker recruitment in the previous stage, selected mobile devices can start collecting required sensing data at specific locations and then submit the sensed data to the nearest edge server for further processing. 
By doing so, devices can expect to achieve local privacy preservation at the level of edge servers as their submitted data will be processed and analyzed by the server locally, along with the collected sensing data from other connected mobile devices, instead of being directly submitted to a global platform. Nevertheless, even with this computation paradigm, devices may still have the concern about whether edge servers can really keep the local privacy protection expectation in practice without leaking the received sensing data to other parties, which is also a private behavior for the edge server and thus can never be explicitly known to devices. To solve this concern, we take advantage of the mechanism design theory \cite{hurwicz2006designing} which empowers devices to utilize the market power to restrain potential data privacy leakage behavior of edge servers. 

As indicated by \textit{revelation principle} in mechanism design, any Nash equilibrium of a Bayesian game (i.e., incomplete-information game) can be identically achieved in another incentive-compatible direct mechanism where game players report their private information truthfully. In the sensing data submission scenario, as mentioned above, we define the data privacy leakage degree of the edge server as its private information and denote it as $\theta \in [0,1]$, while the objective of the mobile device is to design a mechanism, or a game rule in other words, which can push the rational server to behave based on its real private information  $\theta$ for obtaining the optimum payoff. 

To maintain the long-term participation enthusiasm of workers in MCS, mobile devices should be compensated with sensing rewards. Since mobile devices accomplish sensing tasks with highly random trajectories, making it impractical for some devices to return to the previously connected edge servers for obtaining sensing rewards, we consider that edge servers pay devices for the submitted sensing data in an immediate manner. Specifically, the device has a self-evaluation on the value of the sensing data, denoted by $v$, to cover at least the sensing cost; and the server will decide how much profit it likes to pay to the device, denoted by $s$, which leads to a total payment of $v+s$ for the sensing data. 

During a period of time $[0,T]$, the expected utility of the edge server with respect to the device's sensing data can be defined as 
\begin{align}\label{eq:us}
U_s = \int_0^T  (R + r(\theta) - v - s) \cdot g(v,s) dt,
\end{align}
where $R$ is the amount of reward obtaining from processing it for the requestor, and $r(\theta) $ is the extra reward of leaking the sensing data, defined as $r(\theta) = \alpha_s \theta + \beta_s$ with coefficients $\alpha_s, \beta_s\geq 0$. Besides, $g(v,s)$ indicates the successful data collection probability between the device and the edge server, which is defined as
\begin{align}\label{eq:probability}
g(v,s) = \epsilon \frac{v}{\bar{v}} + (1-\epsilon) \frac{s}{\bar{s}}.
\end{align}
In the above equation, $\epsilon \in [0,1]$ is a scalar; $\bar{v}$ and $\bar{s}$ are the maximum values of $v$ and $s$, respectively. In detail, the successful probability is positively related to both the server's decision on the profit it willing to offer to the device and the device's self-assessment of data value. 
It is worth noting that the above definition of $g(v,s)$ is known by both sides prior to the mechanism design process.

At the same time, the expected utility of the mobile device can be calculated by
\begin{align}\label{eq:ud}
U_d = \int_0^T  ( v + s - C) \cdot g(v,s) dt,
\end{align}
where $C=c_d + \eta c_s$ is the overall cost of the device with $c_d$ denoting the sensing cost, $c_s$ representing the expected lost caused by the privacy leakage behavior of the server, and $\eta$ being a positive scalar. Considering that the privacy-leaking lost is related to the value of the device's sensing data, we define $c_s = \xi v$ where $\xi>0$.

According to mechanism design theory, $v$ and $s$ are the strategies of the device and the server, respectively, where $v$ is a function of $s$, acting as a game rule derived by the device for guaranteeing the privacy protection behavior of the server. In particular, the delicately designed game rule $v^*(s)$ can push the server to select the strategy $s$ based on its real private information $\theta$. The overall process is summarized in Algorithm \ref{al:2} and can be described as follows:
\begin{enumerate}
\item The device proposes a game rule $v^*(s)$ to maximize the expected utility $U_d$ and sends it to the server.
\item After receiving $v^*(s)$, the server calculates the best strategy $s^*$ maximizing the expected utility $U_s$ based on the hidden private information $\theta$. With $s^*$ and further derived $v^*(s)$, the server can determine whether it is profitable to accept the game rule or not. If the answer is yes, the server will send $s^*$ back to the device.
\item If the device receives the returned $s^*$ from the server within a certain time limit, $v^*$ can be calculated accordingly; otherwise, the device will terminate the sensing data submission process.
\end{enumerate}

Specifically, $v^*(s)$ and $s^*$ can be calculated by maximizing $U_d$ and $U_s$, respectively. To maximize $U_d$ in \eqref{eq:ud}, we denote the integrand as $F_d = ( v + s - C) \cdot g(v,s)  $.  Employing the calculus of variations method, we can derive $v^*(s)$ via solving the
associated Euler-Lagrange equation $\frac{\partial F_d}{\partial v} - \frac{d}{dt}\frac{\partial F_d}{\partial v'} = 0$ under the constraint $\frac{\partial^2 F_d}{\partial v^2}<0$. 
Accordingly, we can obtain that under the condition of $\eta \xi >1$, the optimal game rule of the device is
\begin{align}\label{eq:v_star}
v^*(s) = \frac{(1-\epsilon)(\eta \xi -1)\bar{v}s - \epsilon \bar{s}(s-c_d)}{2\epsilon(1-\eta \xi)\bar{s}}.
\end{align}
 
With the above $v^*(s)$, we can obtain $s^*$ through maximizing $U_s$ in \eqref{eq:us} using the similar method. To avoid redundancy, we omit the detailed calculation process of $s^*$ and report the result as follows. 
Let 
\begin{align*}
A_0 & = \frac{\epsilon}{\bar{v}}+\frac{(\eta \xi -1)(\epsilon-1)}{\bar{s}}, \\
A_1 & = 2(\eta \xi -1),
\end{align*}
 then we can have the optimal strategy of the edge server as
\begin{align*}
s^* = \frac{R+\alpha \theta + \beta +\frac{c_d}{A_1}}{2(\frac{\bar{v}A_0}{\epsilon A_1}+1)} + \frac{c_d \epsilon}{2\bar{v} A_1 (\frac{A_0}{A_1}-\frac{\epsilon-1}{\bar{s}})}.
\end{align*}

In fact, our proposed mechanism satisfies the incentive compatibility constraint, which can be demonstrated by the following theorem.

\begin{theorem}\label{thrm:incentive_compatible}
The game rule $v^*(s)$ proposed by the device is incentive-compatible.
\end{theorem}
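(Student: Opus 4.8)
The plan is to establish incentive compatibility in the sense of the revelation principle invoked above: under the fixed game rule $v^*(s)$, a server of true type $\theta$ should maximize its expected utility by selecting the strategy $s^*(\theta)$ tied to its genuine private information, so that misreporting is never profitable. To make this precise I would recast the interaction as a direct mechanism. Writing $s^*(\hat\theta)$ for the server's optimal response computed as if its type were $\hat\theta$, define the payoff of a type-$\theta$ server that behaves as a type-$\hat\theta$ server by
\begin{align*}
\Pi(\hat\theta;\theta) = U_s\bigl(s^*(\hat\theta);\theta\bigr),
\end{align*}
where the true type enters the integrand of $U_s$ only through $r(\theta)=\alpha_s\theta+\beta_s$. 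Incentive compatibility is then exactly the assertion that $\Pi(\hat\theta;\theta)$ is maximized at $\hat\theta=\theta$ for every $\theta\in[0,1]$.

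First I would verify the local first- and second-order incentive conditions by an envelope argument. Differentiating in the reported type and applying the chain rule gives $\partial_{\hat\theta}\Pi = \bigl(\partial_s U_s\bigr)\big|_{s=s^*(\hat\theta)}\cdot \frac{d s^*}{d\hat\theta}$. Since $s^*(\theta)$ is by construction the maximizer of $U_s(\cdot;\theta)$, stationarity $\partial_s U_s\big|_{s=s^*(\theta)}=0$ holds, so $\partial_{\hat\theta}\Pi\big|_{\hat\theta=\theta}=0$ and truthful behavior is a stationary point. Differentiating once more and evaluating at $\hat\theta=\theta$, the term proportional to $\frac{d^2 s^*}{d\hat\theta^2}$ is eliminated by the same stationarity, leaving $\partial^2_{\hat\theta}\Pi\big|_{\hat\theta=\theta}=\bigl(\partial^2_s U_s\bigr)\big|_{s^*(\theta)}\cdot\bigl(\tfrac{d s^*}{d\hat\theta}\bigr)^2$, which is nonpositive as soon as $U_s$ is concave in $s$ at the optimum. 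This cancellation is the heart of the matter: it reduces incentive compatibility to the very second-order optimality of $s^*$ that already underlies its derivation.

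I therefore expect the main obstacle to be the explicit verification of the concavity $\partial^2_s U_s\big|_{s^*}<0$ and of the differentiability and strict monotonicity of the map $\theta\mapsto s^*(\theta)$, both of which require substituting the closed forms of $v^*(s)$ from \eqref{eq:v_star} and of $s^*$ together with the constants $A_0$ and $A_1$. The standing condition $\eta\xi>1$ should be decisive here, since it fixes the sign of the denominator in $v^*(s)$ and hence the curvature of the composed objective; I would track it carefully to guarantee concavity across the whole range $[0,1]$. Finally, to upgrade the local statement to global incentive compatibility I would observe that, because $s^*(\theta)$ is the global maximizer of $U_s(\cdot;\theta)$, the one-line inequality $\Pi(\theta;\theta)=\max_s U_s(s;\theta)\ge U_s\bigl(s^*(\hat\theta);\theta\bigr)=\Pi(\hat\theta;\theta)$ holds for every $\hat\theta$, confirming that truthful revelation is a global rather than merely local best response and completing the proof.
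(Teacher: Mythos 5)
Your final paragraph --- the one-line inequality $\Pi(\theta;\theta)=\max_s U_s(s;\theta)\ge U_s\bigl(s^*(\hat\theta);\theta\bigr)=\Pi(\hat\theta;\theta)$ --- is precisely the paper's entire proof, which simply observes that $s^*(\theta)$ is computed by maximizing $U_s$ under the true type, so behaving as any other type cannot be profitable. The preceding envelope-theorem analysis of local first- and second-order conditions is correct but redundant given that global argument; the only substantive point it surfaces that the paper also glosses over is the strict concavity of $U_s$ in $s$ (under $\eta\xi>1$) needed to turn the paper's claimed strict inequality $U_s(\hat s^*)<U_s(s^*)$ from an assertion into a fact.
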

\begin{proof}
Let $\hat{\theta}$ be the fake private information of the server. Then the optimal strategy of the server based on this $\hat{\theta}$, denoted as $\hat{s}^*$, will be sent to the device. As $\hat{\theta}$ is different from the real $\theta$, we have $\hat{s}^*\neq s^*(\theta)$. On the other hand, since $s^*(\theta)$ is calculated via maximizing $U_s$, we have $U_s(\hat{s}^*) < U_s(s^*)$, which makes the server's behavior of submitting the optimal strategy based on its fake private information not beneficial. Thus, any profit-driven and reasonable server will choose to respond with the real private information, which demonstrates the incentive compatibility of our proposed mechanism.
\end{proof}

Since the payment of $v+s$ for each device is covered in advance by the edge server, the total amount of payment for all locally connected devices accomplishing a specific MCS learning task should be reported to the blockchain system during the following FL step so that the edge server can be appropriately compensated finally from the requestor's payment. 
The specific allocation of compensation will be elaborated in Section \ref{subsec:payment}. 

\begin{algorithm}[t]
\caption{Mechanism Design based Data Submission} 
\label{al:2}

\begin{algorithmic}[1]

\STATE Raw data $\leftarrow$ Mobile devices collect data

\STATE $v^*(s)$ $\leftarrow$ The device determines the optimal game rule according to (\ref{eq:v_star})
\STATE The device sends $v^*(s)$ to the edge server
\STATE $U_s$ $\leftarrow$ The edge server calculates $U_s$  based on $v^*(s)$
\STATE $s^*$ $\leftarrow$ $\max (U_s)$ 
\IF{Accepting $v^*(s)$ is profitable}
\STATE The edge server sends $s^*$ to the local device
\STATE $v^*$ $\leftarrow$ $\max (U_d)$ 
\STATE The device submits the raw data to the edge server
\ELSE
\STATE The edge server keeps silent
\STATE The device terminates the sensing data submission process
\ENDIF
\end{algorithmic}
\end{algorithm}

\subsection{Blockchain-based Federated Learning at Edge}\label{subsec:blockchain}
With the received sensing data from mobile devices, edge servers can construct their local datasets so that they can further collaboratively conduct FL to derive the global learning results for requestors. However, there always exists an aggregator, such as a central server, to coordinate the FL process among all participated computing parties. Although some research suggest multiple servers to take the work of aggregation, the logical topology is still centralized, suffering from lots of weaknesses, such as single point of failure. 

To fundamentally address these problems, 
the concept of fully decentralized FL is proposed where FL participants communicate with each other via the peer-to-peer channels. As an exemplary implementation of this concept, blockchain \cite{lu2019blockchain,majeed2019flchain} based FL is widely studied recently as discussed in Section \ref{sec:related}. Thus, in our proposed MCS learning framework, we also consider to employ the decentralized ledger to replace the centralized aggregator in FL, where a set of blockchain nodes undertake the coordination jobs.
Specifically, the consortium blockchain is used here, which is maintained by a group of consortium members responsible for validating all content waiting to be included on the main chain as permanent records. 

For the reason of using a consortium blockchain instead of a public or private one to achieve distributed FL in our proposed MCS learning scenario, it is quite straightforward. On the one hand, either the FL model updates or MCS task related records on the blockchain are not supposed to be unlimitedly available to anyone from the public; on the other hand, blockchain nodes are not authorized by one centralized entity but several predefined members, such as trusted edge servers, and they need to be general and scalable enough to achieve a certain level of decentralization of the blockchain system. 
It is worth mentioning that although the blockchain system is mainly introduced here for FL implementation, records on the blockchain are not limited to ML model parameters and updates but also including MCS task related information defined in Section \ref{subsec:task}. 
 
Further, with the consortium members maintaining the distributed ledger, there are a lot of consensus protocols applicable to the consortium blockchain, such as PBFT 
and HotStuff \cite{yin2019hotstuff}. 
In our proposed system, we adopt PBFT as the consensus protocol in the consortium blockchain since it can improve the efficiency of reaching consensus and reduce time consumption. PBFT is one of the classical consensus protocols in blockchain, and its working process in our proposed blockchain can be described as below: at the beginning, the leader creates a block containing transactions, e.g., the local model updates, and then the leader broadcasts the block to other nodes in the blockchain, i.e., followers; next, all followers conduct the requested service (i.e., verifying the received block) and send back replies to the leader; in the end, if the leader receives $f+1$ replies from followers with the same result, where $f$ represents the maximum number of faulty nodes allowed, it means consensus has been reached in the blockchain system.
For stability and sustainability of the blockchain system, we consider that all blockchain nodes will receive incentives based on the information recorded on the main chain. In particular, any node generating a valid block will be finally rewarded by the payment from the requestor. This can be achieved due to the following three-aspect reasons: 1) the identity of blockchain node who generates a valid block is clearly indicated on the block; 2) the information records included in the block body can be indexed by the requestor identity or a unique task number assigned when the request is recorded on the blockchain; and 3) the payment settlement with the requestor and the specific reward allocation will be guaranteed by a well-designed scheme which will be introduced in the next section. 

\begin{figure}
\centering
\includegraphics[width=0.38\textwidth]{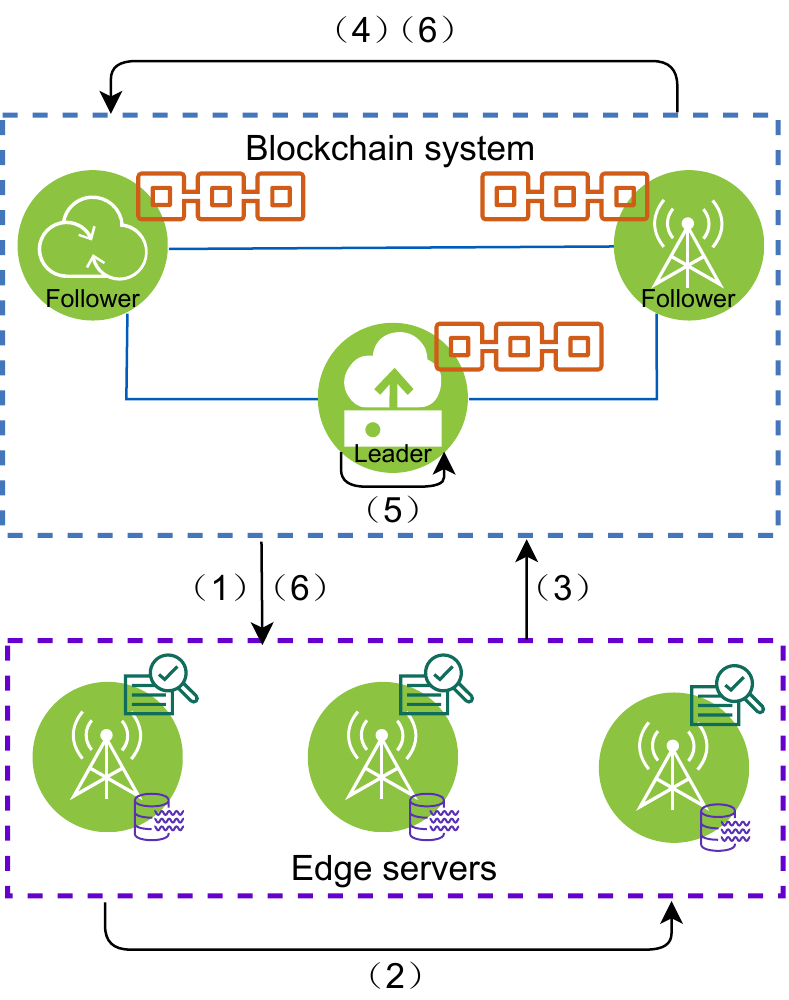}
\caption{The working process of blockchain-based FL, including six main steps: (1) initialization of ML model; (2) local training on edge servers; (3) submission of local model updates to the blockchain; (4) signature verification for the local model updates; (5) global model aggregation by the leader; (6) updating the aggregated model with edge servers.}
\label{fig:blockchain_FL}
\end{figure}

In Algorithm \ref{al:3}, we present the process of blockchain-based FL at edge.
To clearly illustrate the work flow of this process, we present it in Fig. \ref{fig:blockchain_FL} and describe main steps as follows:
\begin{enumerate}
\item[(1)] \textbf{Initialization.} If the requestor has a clear sense or requirement on the ML model, the initial model can be sent to the blockchain system during the task publication process as introduced in Section \ref{subsec:task}, which will be included on the blockchain for reference (Lines 1-2). While if the requestor has no idea on initial model selection, the blockchain system, especially the leader packaging the MCS learning request, can decide one via comparing the received task with the historical tasks\footnote{ 
Since the size of model parameters can be too large to fit in a block, we may employ the IPFS to store the model data, by which the records on the blockchain are referring to the address of the data. } (Lines 3-4).

\item[(2)] \textbf{Local learning.} Edge servers with local sensing datasets can obtain the ML model from the blockchain and then conduct the local training (Lines 6-8). 

\item[(3)] \textbf{Updates submission.} Once edge servers accomplish the local learning process, they submit model updates signed with their private keys to the blockchain with the corresponding task index for the convenience of model aggregation (Line 9).

\item[(4)]  \textbf{Signature verification.} After receiving local model updates from edge servers, blockchain nodes will verify the signatures to check whether the participants are legal or not (Line 11). If yes, these submissions will be broadcast to reach as many blockchain nodes as possible so as to be included in a valid block timely (Lines 12-13); otherwise, they will be discarded immediately (Lines 14-15). 

\item[(5)] \textbf{Model aggregation.} Due to the function of the consensus protocol, there will be one blockchain node, i.e., leader, in charge of generating a valid block each time, who is generated by Leader Election process \cite{castro1999practical} (Line 18). In this case, we consider that the leader is also responsible to calculate the aggregated model according to all submitted local model updates in the current round of FL as received in the blockchain system, which can avoid redundant or conflicting calculation to save computing resources, and then generate a new block including the aggregated model (Lines 19-20). 

\item[(6)] \textbf{Model updating.} If the global model is converged, the federated learning process can be terminated (Lines 21-22); otherwise, the above aggregated model recorded on the blockchain can be accessed by edge servers for updating their local models and begin the next round of FL (Lines 23-24). 


\end{enumerate}
The above process from (2) to (6) will be conducted repeatedly until the required model performance is realized, where the leader will explicitly associate the requestor identity with the finalized global model so as to be saved on the blockchain. 
Then the requestor can easily obtain the final result via the blockchain client.

\begin{algorithm}
\caption{Overall Process of the Proposed Blockchain-based FL at Edge} 
\label{al:3}
\begin{algorithmic}[1]
\REQUIRE Initial model $G(w)$, local model updates $f(w)$
\ENSURE Final model $G^*(w)$
\IF{Initial model $G(w)$ is specified by the requester}
\STATE $G(w)$ is recorded on the blockchain
\ELSE
\STATE $G(w)$ will be determined by comparing the received task with the historical tasks
\ENDIF
\WHILE{Not reaching the required model performance}
\STATE Edge servers obtain $G(w)$ from the blockchain
\STATE Local model updates $f(w)$ $\leftarrow$ Edge servers conduct local training to improve $G(w)$
\STATE Edge servers submit $f(w)$ with signature to the blockchain system
\WHILE{Blockchain node}
\STATE Verify the signature of $f(w)$
\IF{$f(w)$ is valid}
\STATE $f(w)$ will be broadcast to others
\ELSE
\STATE $f(w)$ will be discarded
\ENDIF
\ENDWHILE
\STATE Leader $i$ $\leftarrow$ Leader Election
\STATE New global model $G'(w)$ $\leftarrow$ Leader $i$ conducts model aggregation
\STATE New block $\leftarrow$ Block generation
\IF{$G'(w)$ is converged}
\STATE $G^*(w)$ $\leftarrow$ $G'(w)$
\ELSE
\STATE $G(w)$ $\leftarrow$ $G'(w)$
\ENDIF
\ENDWHILE
\RETURN $G^*(w)$\\
\end{algorithmic}
\end{algorithm}

\subsection{Payment Settlement and Allocation}\label{subsec:payment}
With the joint effort of mobile devices, edge servers, and blockchain nodes, the whole MCS learning system can function to meet the demand of the requestor. However, this system cannot work in the long term unless the incentives for all participants are guaranteed and allocated appropriately. In this section, we design a scheme for payment settlement and allocation, dealing with how to enforce requestors to pay for the accomplished MCS learning tasks. 

All the payment from the requestor is used to cover the incentives for data sensing of mobile devices, local learning of edge servers, FL model aggregation and block generation of blockchain nodes. In particular, the incentive for any mobile device has been immediately covered as $v+s$ by the edge server as introduced in Section \ref{subsec:mechanism}; while the incentives for FL local training, model aggregation, and block generation can be designed as flat rates for fairness consideration, denoted by $w_l$, $w_a$, and $w_b$, respectively. Note that all the above incentives can be calculated by tracking the recorded information on the blockchain, and thus the total amount of incentive each requestor needs to pay can be easily calculated out  by the leader who aggregates the final model, denoted by 
as $W$.

Here we consider that the requestor will use the MCS learning service repeatedly via interacting with the blockchain system. To provide better user experience to requestors, our proposed system updates the MCS learning result on blockchain first and then conducts the payment settlement and allocation. This sequential interaction process makes room for malicious requestors rejecting to fully pay the total amount of incentive $W$. To alleviate this problem, we propose to take advantage of zero-determinant (ZD) game theory \cite{hu2019quality} to safeguard the interests of all participants in the MCS learning system. 
Since the leader finalizing the global model works as the last step determining whether the requestor can successfully obtain the final MCS learning result, we consider to deploy the payment enforcement scheme there with a smart contract implementing the detailed strategy. Note that the smart contract is stored on the blockchain as a record so that any blockchian node being the leader can run this program if the finalized global model satisfies the requested model performance.

With $a_r$ and $a_b$ respectively denoting the actions of the requestor and the leader finalizing the ML model in blockchain, we define the requestor not paying the full amount of incentive $W$ as defection, denoted by $a_r=d$, and the action of paying $W$ as cooperation with $a_r=c$; while as the key point serving the requestor, the leader can opportunistically choose to update the final learning results to the main chain or not, denoted as $a_b =c$ and $d$, respectively. With each having two actions, there exist four combinations of game states, i.e., $a_b a_r = \{cc,cd, dc,dd\}$. 
And their payoffs under all cases can be expressed as $\mathbf{x} = (x_1,x_2,x_3,x_4)$ for the leader and $\mathbf{y} = (y_1,y_2,y_3,y_4)$ for the requestor. Referring to the practical outcomes of these four cases, we can observe that 1) the defector in $cd$ and $dc$ states can gain the highest payoff while the cooperator receives the lowest payoff; 
2) $cc$ leads to the second highest payoffs for both since they cost and gain normally; and 3) $dd$ brings the second lowest payoffs for both as nobody in this case loses anything. 
Thus, there exist the relationships $ x_3 > x_1 > x_4 > x_2$ and $y_2 > y_1 > y_4 > y_3$.

For the game being repeated, we define the mixed strategies of the leader and the requestor as $\mathbf{p}=(p_1,p_2,p_3,p_4)$ and $\mathbf{q}=(q_1,q_2)$, respectively, where $p_i,~i \in \{1,2,3,4\}$ is the probability of $a_b=c$ given each game result in the last round, while $q_i,~i \in \{1,2\}$ is the probability of $a_r=c$ when $a_b=c$ or $a_b=d$ in this round. 
Note that the difference between the definitions of $\mathbf{p}$ and $\mathbf{q}$ lies in their action order, where the leader performs according to the action result in the last round while the requestor behaves based on the leader's action in the current round.

According to the extended version of ZD strategy we previously proposed in \cite{hu2019quality}, we can see that it becomes beneficial for the leader to be the first mover if we empower the leader to use the ZD strategy. By this means, the leader can unilaterally control the relationship between the expected payoffs of both sides and thus enforce the desired game result. To be specific, when the strategy of the leader $\mathbf{p}$ satisfies  $\tilde{\mathbf{p}}=(p_1-1,p_2-1,p_3,p_4) = \alpha \mathbf{x} + \beta \mathbf{y} + \gamma \mathbf{1}$ with $\mathbf{1}$ denoting a vector of four ones, a linear relationship between their expected payoffs can be enforced as
\begin{align*}
\alpha E_b + \beta E_r +\gamma = 0,
\end{align*}
where $E_b$ and $E_r$ are respectively the expected payoffs of the leader and the requestor in the long term. 

In our case, as the leader desires to enforce the long-term cooperation from the requestor, mutual cooperation becomes a feasible and stable solution, where their expected payoffs are $x_1$ and $y_1$. Thus, inspired by the \textit{cooperation-enforcing control} in \cite{hao2018payoff}, the leader can set a specific linear relationship as $E_b-x_1 = \chi (E_r - y_1)$ with $\chi \geq 1$ to drive the full cooperation of the requestor, which yields the strategy of the leader $\mathbf{p}$ satisfying  $\tilde{\mathbf{p}}=(p_1-1,p_2-1,p_3,p_4) = \gamma ((\mathbf{x}-x_1 \mathbf{1}) - \chi (\mathbf{y} - y_1 \mathbf{1}))$ ($\gamma \neq 0$). The effectiveness of this scheme can be demonstrated in the following theorem.

\begin{theorem}
When the leader sets the strategy $\mathbf{p}$ to meet  $\tilde{\mathbf{p}}=\gamma ((\mathbf{x}-x_1 \mathbf{1}) - \chi (\mathbf{y} - y_1 \mathbf{1}))$ where $\gamma \neq 0$, the stable state is mutual cooperation.
\end{theorem}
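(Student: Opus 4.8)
The plan is to treat the interaction as an infinitely repeated two-player game between the leader and the requestor, and to show that once the leader commits to the announced ZD strategy $\mathbf{p}$, the requestor's only rational long-run behavior is full cooperation, so that the state $cc$ becomes the stable outcome. The backbone is the memory-one payoff machinery of Press and Dyson, as extended to our sequential (leader-first) setting in \cite{hu2019quality}. First I would write the stationary expected payoffs $E_b$ and $E_r$ as the ratio-of-determinants in which one column is the shifted leader strategy $\tilde{\mathbf{p}}=(p_1-1,p_2-1,p_3,p_4)$. Substituting the prescribed form $\tilde{\mathbf{p}}=\gamma\bigl((\mathbf{x}-x_1\mathbf{1})-\chi(\mathbf{y}-y_1\mathbf{1})\bigr)$ makes that column an explicit linear combination of the $\mathbf{x}$, $\mathbf{y}$, and $\mathbf{1}$ columns; by multilinearity of the determinant the combination collapses and yields, independently of the requestor's strategy $\mathbf{q}$, the enforced relation
\begin{align*}
E_b - x_1 = \chi (E_r - y_1).
\end{align*}
This step is essentially inherited from \cite{hu2019quality,hao2018payoff}, so I would state and cite it rather than rederive the determinant identity.

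The heart of the proof is then a best-response argument for the requestor constrained to this line. Since the leader has fixed $\mathbf{p}$, every attainable payoff pair $(E_r,E_b)$ lies simultaneously on the enforced line and inside the feasible region, which is the convex hull of the four outcome vertices $(y_i,x_i)$. I would show that the requestor, who maximizes $E_r$, cannot exceed $y_1$. The key is the geometry: the line passes through the mutual-cooperation vertex $(y_1,x_1)$ with slope $\chi\ge 1$ in the $(E_r,E_b)$ plane, and the only vertex with $E_r>y_1$ is the requestor-tempting state $cd=(y_2,x_2)$, whose leader payoff $x_2$ is the smallest of all by $x_3>x_1>x_4>x_2$. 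A short computation using $y_2>y_1$, $x_2<x_1$, and $\chi\ge 1$ shows that $cd$ lies strictly below the line; since $cd$ is the \emph{sole} vertex beyond $y_1$, convexity forces the entire portion of the feasible region with $E_r>y_1$ to fall beneath the line. Hence no feasible payoff pair on the line has $E_r>y_1$, so the requestor's expected payoff is maximized exactly at $E_r=y_1$, attained only when the realized state is $cc$, i.e.\ when the requestor pays in full.

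Finally I would argue stability: at $cc$ both expected payoffs are pinned to $(x_1,y_1)$, and because any requestor deviation toward defection moves $(E_r,E_b)$ along the enforced line to a strictly smaller $E_r$ (and, since $\chi>0$, a strictly smaller $E_b$ as well), the requestor has no profitable unilateral move, while the leader is already committed; thus $cc$ is a self-enforcing fixed point and repeated play settles there. I expect the main obstacle to be the feasibility step: rigorously verifying that the enforced line does not re-enter the feasible region for $E_r>y_1$ needs the full ordering $y_2>y_1>y_4>y_3$ together with $x_3>x_1>x_4>x_2$ and the hypothesis $\chi\ge 1$, and care is required to confirm that the segment of the line actually realizable as $\mathbf{q}$ varies has its $E_r$-maximum at the cooperation vertex rather than at a boundary point induced by the $dd$ or $dc$ states. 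Establishing the determinant identity for the asymmetric strategy profile (the leader conditioning on the full previous state, the requestor only on the leader's current move) is the one place I would lean most heavily on the extended ZD formulation of \cite{hu2019quality}.
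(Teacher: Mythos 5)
Your proposal follows essentially the same route as the paper's proof: both rest on the ZD-enforced line $E_b - x_1 = \chi (E_r - y_1)$ together with the claim that no feasible payoff pair can have both coordinates exceed the mutual-cooperation point, so the requestor's achievable maximum on that line is $E_r = y_1$, forcing $E_b = x_1$ and hence mutual cooperation. Your convex-hull geometry (locating the lone vertex with $E_r > y_1$, namely $cd$, strictly below the line) is a more explicit rendering of the paper's one-sentence Pareto assertion, and the feasibility caveat you flag is exactly the step the paper also leaves implicit.
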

\begin{proof}
As a utility-driven player, the requestor usually hopes to maximize the payoff. 
With the leader setting the strategy $\mathbf{p}$ to meet  $\tilde{\mathbf{p}}=\gamma ((\mathbf{x}-x_1 \mathbf{1}) - \chi (\mathbf{y} - y_1 \mathbf{1}))$, there exists $E_b-x_1 = \chi (E_r - y_1)$. In this case, if the requestor gains $E_r$ higher than $y_1$, leading to $E_b-x_1$ with a value of $\chi$ times $E_r-y_1$, both can gain payoffs higher than those at the mutual-cooperation state, i.e., $y_1$ and $x_1$, at the same time. This cannot be true since the requestor can obtain the payoff larger than $y_1$ only if the leader gets the payoff less than $x_1$, while the leader with the payoff larger than $x_1$ leads to the requestor obtaining the payoff less than $y_1$. 
Thus, the highest expected payoff that the requestor can obtain is $E_r= y_1$, which leads to $E_b = x_1$, corresponding to the mutual-cooperation state in the long term.
\end{proof}

With the function of the above ZD-based payment enforcement scheme, the total amount of incentives $W$ will be sent to the MCS learning system and recorded on the blockchain. All blockchain nodes and edge servers can fetch their respective rewards in a transparent and honest way according to the task accomplishing records on the blockchain, which will also be included on the main chain for potential error tracking.

\section{Experimental Evaluation}\label{sec:evaluation}
In this section, we conduct extensive experiments to evaluate our proposed MCS learning framework via investigating the performance  of specific schemes designed for main procedures. Since the first procedure in Section \ref{subsec:task} mainly introduces the input of the MCS learning system without algorithm design, we only present the experimental results of the procedures from Section \ref{subsec:mechanism} to Section \ref{subsec:payment}.

\subsection{Data Submission Mechanism Design}

As proved in Theorem \ref{thrm:incentive_compatible}, given the mechanism design for data submission process where the mobile device releases the game rule $v^*(s)$, the edge server can obtain the maximized utility only when the real data leakage degree $\theta^*$ is indicated to form the best response strategy $s^*$. Since the private $\theta^*$ of the server is not available for the device to observe, we cannot evaluate its impact. 
Instead, we study the impacts of three observable parameters, i.e., $\eta$, $\xi$ and $\epsilon$, on the maximized utilities of both sides.
In detail, we set $\eta,\xi\in [0,15]$, $\epsilon \in [0,1]$, and   $R=10,\alpha_s = 3, \beta_s = 2, \bar{v}=50,\bar{s}=500,c_d=2,\theta=0.5$. Note that other parameter settings are also evaluated, which present similar trends and thus are omitted here.

We first investigate the impacts of the device's cost parameters $\eta$ and $\xi$ when  $\epsilon $ is set as 0.9. 
As shown in Fig. \ref{fig:utility_eta}, with the increase of $\eta$ and $\xi$, the maximized utility of the device will decrease sharply to a stable value while that of the server presents different trends. With the increasing $\eta$ and $\xi$, the server's maximized utility increases from zero to the largest point firstly and then gradually decreases to a stable value. This is because the increase of cost parameters directly affect the utility of the device, where the higher the cost, the lower the utility; while for the server, the maximized utility $U_s$ is impacted indirectly through the optimal strategies $s^*$ and $v^*(s)$.

\begin{figure}[htbp]
\centering
\subfigure{
\includegraphics[width=0.23\textwidth]{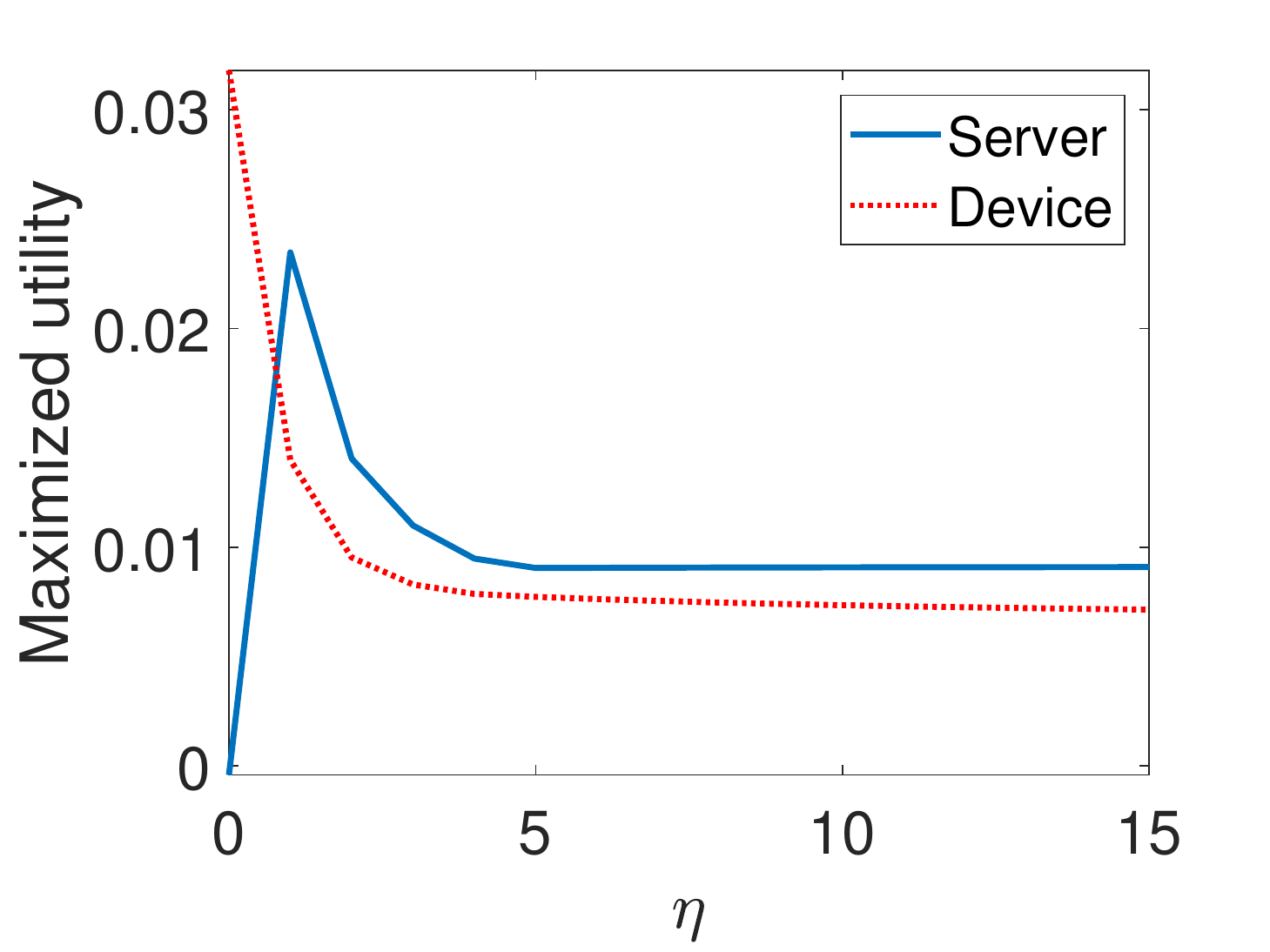}}
\subfigure{
\label{fig:ex4_alld}
\includegraphics[width=0.23\textwidth]{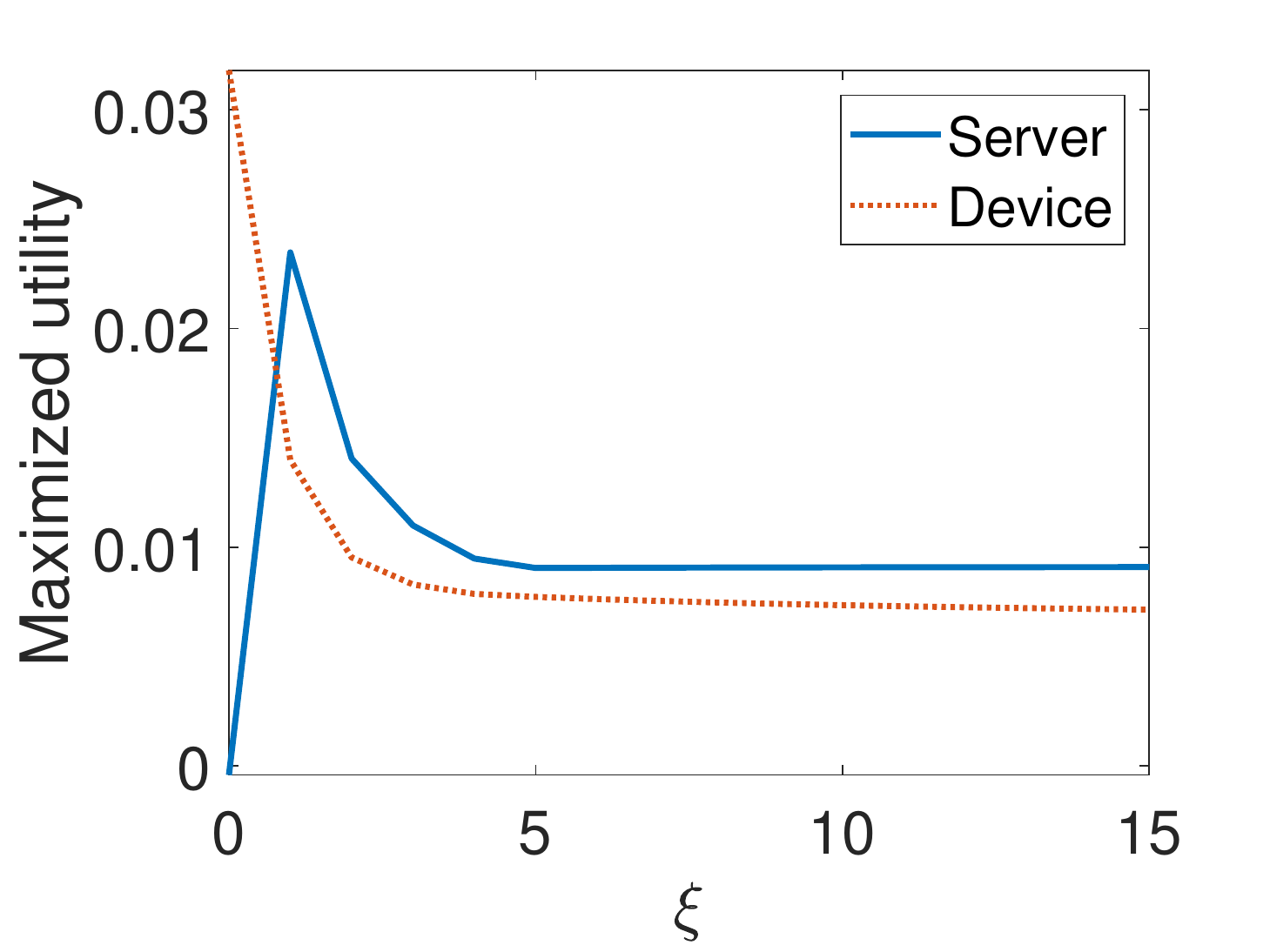}}
\caption{Impacts of $\eta$ and $\xi$ on the maximized utilities.}
\label{fig:utility_eta}
\end{figure}

Then we evaluate the impacts of $\epsilon$ which is the main parameter determining the successful data collection probability. As presented in Fig. \ref{fig:utility_epsilon}, the impact of $\epsilon$ on the server differs from that on the device. For the server, the maximized utility increases first and then decreases to almost zero; while the device's maximized utility decreases all the way to zero. According to \eqref{eq:probability}, the increase of $\epsilon$ implies that the optimal strategy of the device $v^*$ matters more for the successful probability. While referring to the expression of $v^*$ in \eqref{eq:v_star}, one can see that the larger the $\epsilon$, the lower the $v^*$, which leads to the decrease of $U_s$ according to \eqref{eq:ud}. But for the server, the increasing $\epsilon$ makes the decreasing $v^*$ result in a larger $U_d$ according to \eqref{eq:us} at the initial stage; when $v^*$ decreases to a certain degree, the decreased successful probability $g$ gradually functions to decrease $U_d$.

\begin{figure}[htbp]
\centering
\subfigure{
\includegraphics[width=0.23\textwidth]{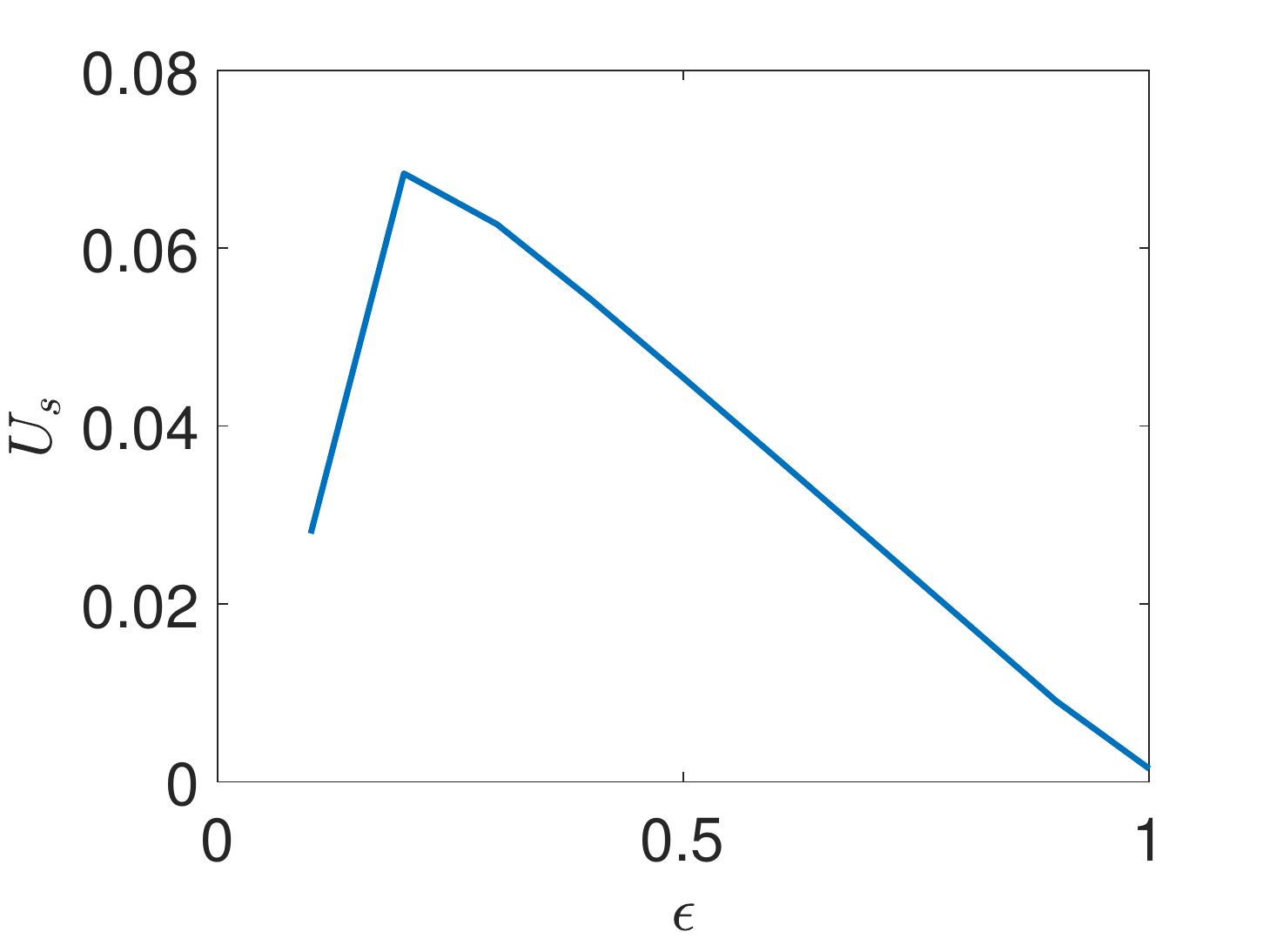}}
\subfigure{
\label{fig:ex4_alld}
\includegraphics[width=0.23\textwidth]{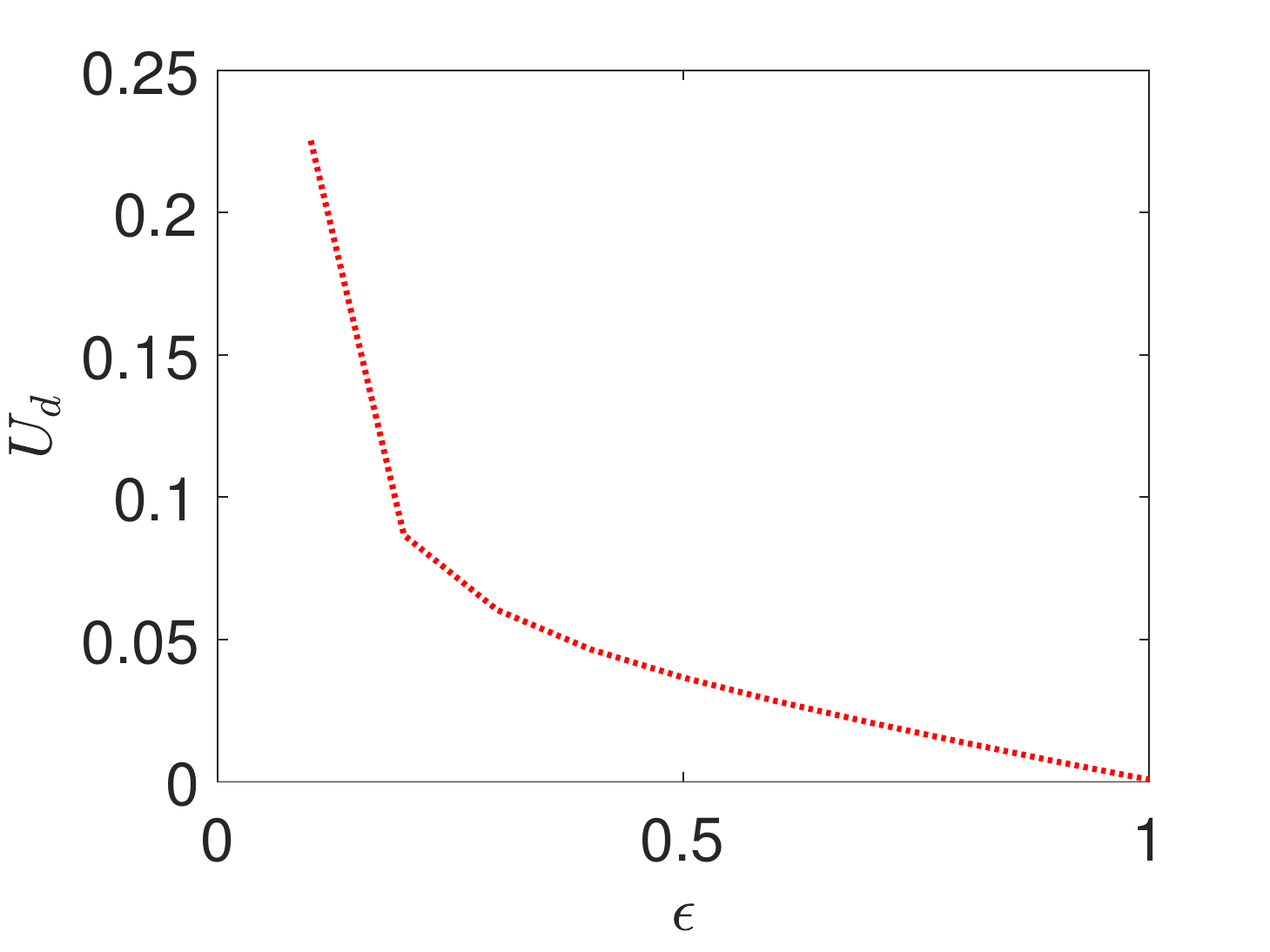}}
\caption{Impacts of $\epsilon$ on the maximized utilities.}
\label{fig:utility_epsilon}
\end{figure}

\subsection{Blockchain-based Federated Learning at Edge}
Considering that the consortium blockchain system for model parameter  coordination 
has limited impact on the FL performance, we conduct the simulation of FL first and focus on investigating the influences of available data size and edge servers' participation on learning results during the FL process. 
To simulate the FL process conducted among edge servers, we utilize a benchmark named LEAF \cite{caldas2018leaf} to execute federated learning using a 2-layer convolutional neural network (CNN) classifier for the FEMNIST dataset with 805,263 samples in total and 3,550 available participants for local computing. 
In our experiments, we first change the total data size used for FL indicated by the varying dataset fraction as 0.001, 0.005, and 0.05, with the fixed ratio between training and test dataset sizes as 9:1 and 35 FL participants; and then the number of participates is changed when the total amount of data used for FL keeps constant, where the participant fraction is set as 0.01, 0.03, and 0.05 to approximately simulate the number of participated edge servers as 35, 105, and 175, respectively.

The learning results under two parameter settings are reported in Fig. \ref{fig:fed_learning}. It can be seen from the left figure that given the same number of FL participants, the higher the dataset fraction used for FL, the better the learning performance with respect to the testing accuracy. 
With too few data samples (blue), the global model cannot even converge after 2,000 rounds of FL. 
While from the right figure, one can see with the same amount of data used for FL, the more the participants, the slower to reach convergence. This is reasonable since more participants splitting the total dataset will result in fewer data samples available for each one, leading to the longer training time. Corresponding to the MCS learning scenario, given the fixed amount of mobile devices to collect sensing data, the more edge servers involved, the smaller the local dataset of each edge server, and thus the slower the learning process.

\begin{figure}[htbp]
\centering
\subfigure{
\includegraphics[width=0.23\textwidth]{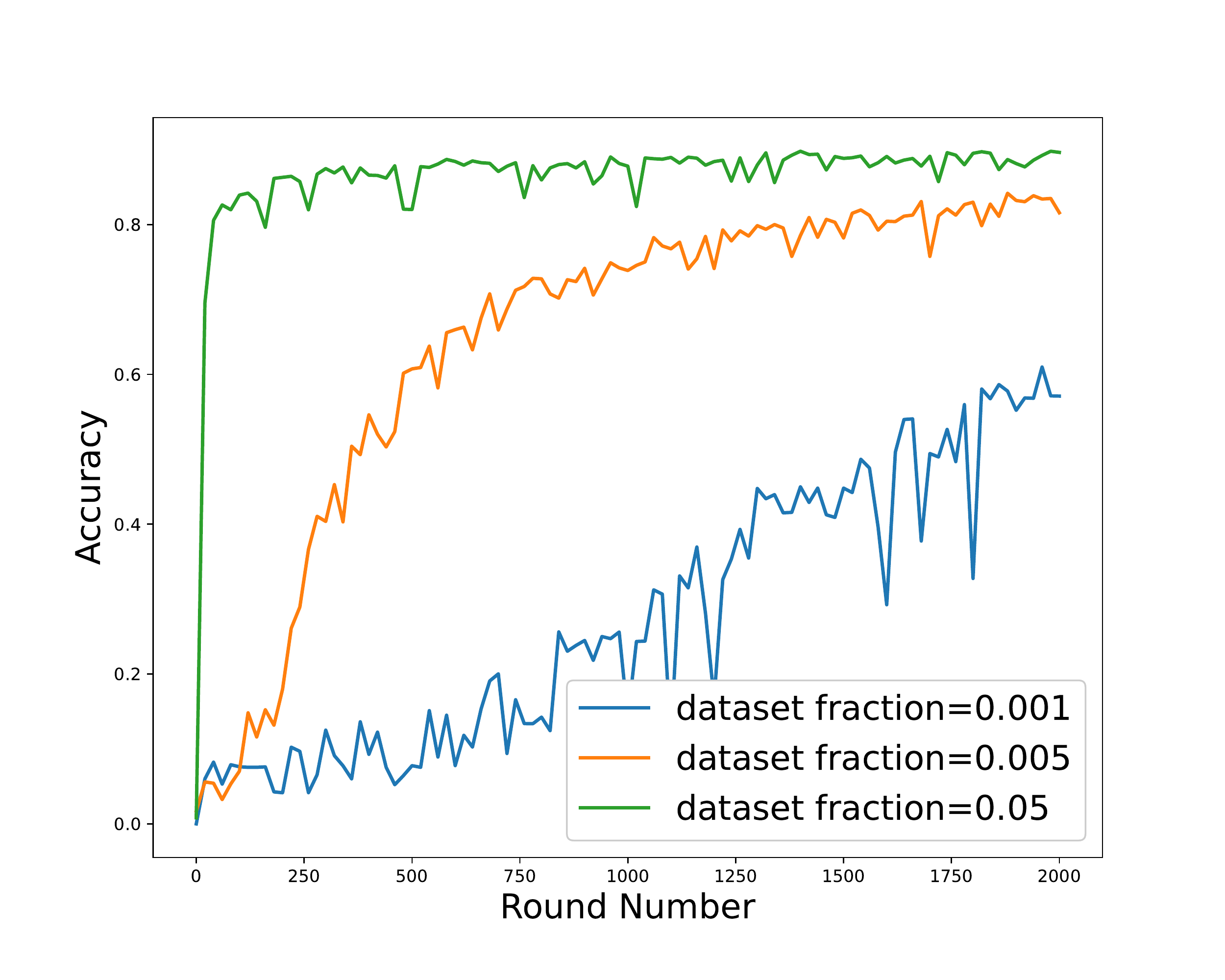}}
\subfigure{
\label{fig:ex4_alld}
\includegraphics[width=0.23\textwidth]{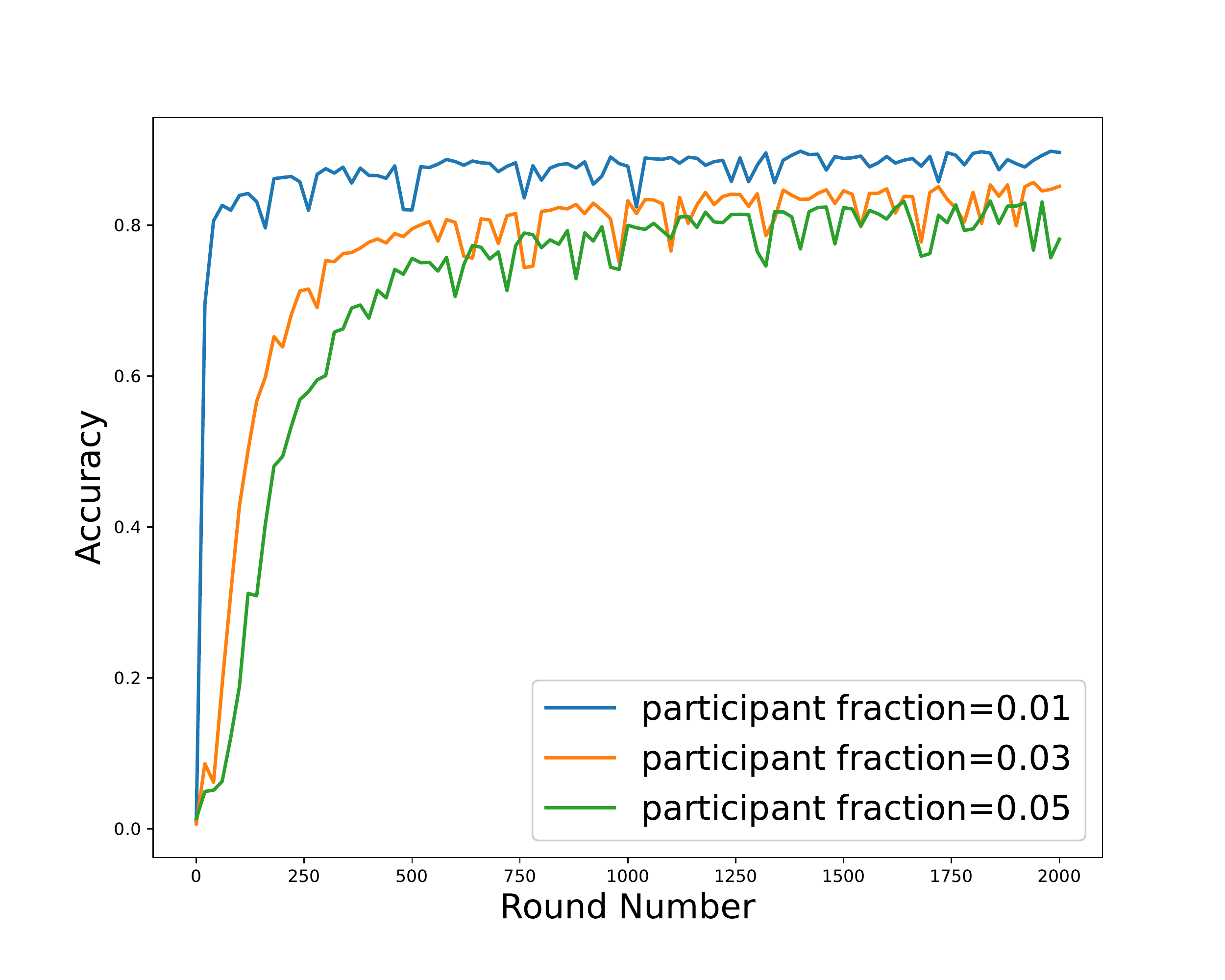}}
\caption{Impacts of data size and the number of edge servers on the learning accuracy.}
\label{fig:fed_learning}
\end{figure}

After the edge servers submit the local model updates to the blockchain, blockchain nodes generate a new block to include all updates and the aggregated model, which has to be supported by the predefined consensus protocol. To explore how the blockchain works in our framework, we design experiments to simulate the PBFT consensus process based on the framework in \cite{pbftconsensus} as an example. For convenience, we assume that each edge server submit the local model updates with a fixed data size of 300 KB. We implement this set of simulations using Python 3.8.5 in macOS 11.0.1 running on Intel i7 processor with 32 GB RAM and 1 TB SSD. 

First, we explore the relationship between the consensus efficiency in the blockchain system and the number of edge servers as FL clients given 10 nodes in the consortium blockchain. From Fig. \ref{fig:node}, we can see that the time cost to reach consensus is linearly correlated to the number of edge servers. This is because increasing the number of edge servers will lead a larger amount of data that need to be packed into blocks, resulting in more time to synchronizing the data on blockchain. 

Then we examine how the number of blockchain nodes affects the consensus efficiency when the number of FL participants (i.e., edge servers) is 10. The experimental results are shown in Fig. \ref{fig:node}.
It is clear that the time consumption increases as the number of blockchain nodes grows. 
The underline reason is that in the blockchain system running on a peer-to-peer networking structure, more nodes will lead to a larger increase in the communication times among them, and thus making the time consumed for reaching consensus increase near-exponentially.

\begin{figure}[H]
\centering
\subfigure{
\includegraphics[width=0.23\textwidth]{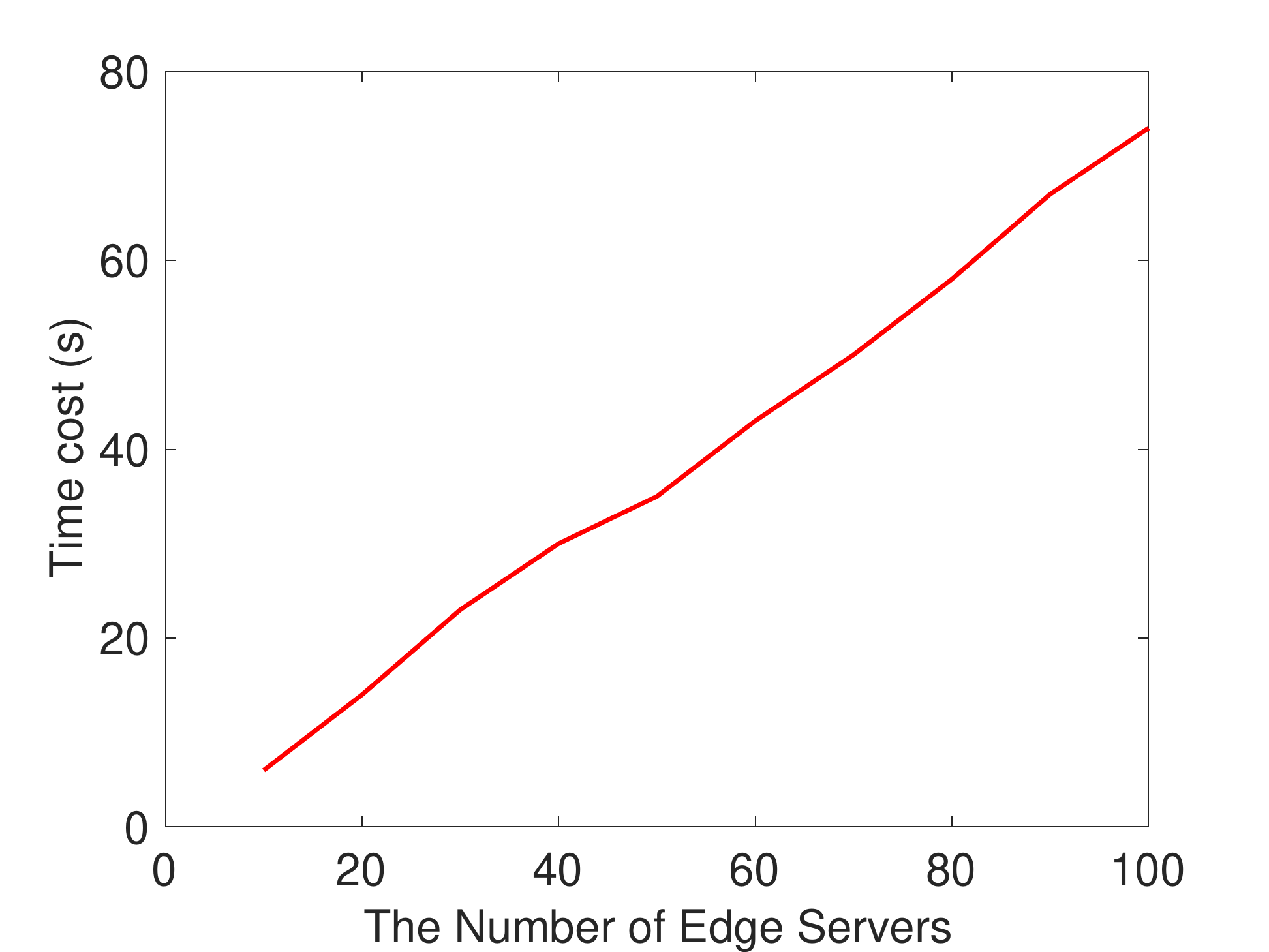}}
\subfigure{
\label{fig:clients}
\includegraphics[width=0.23\textwidth]{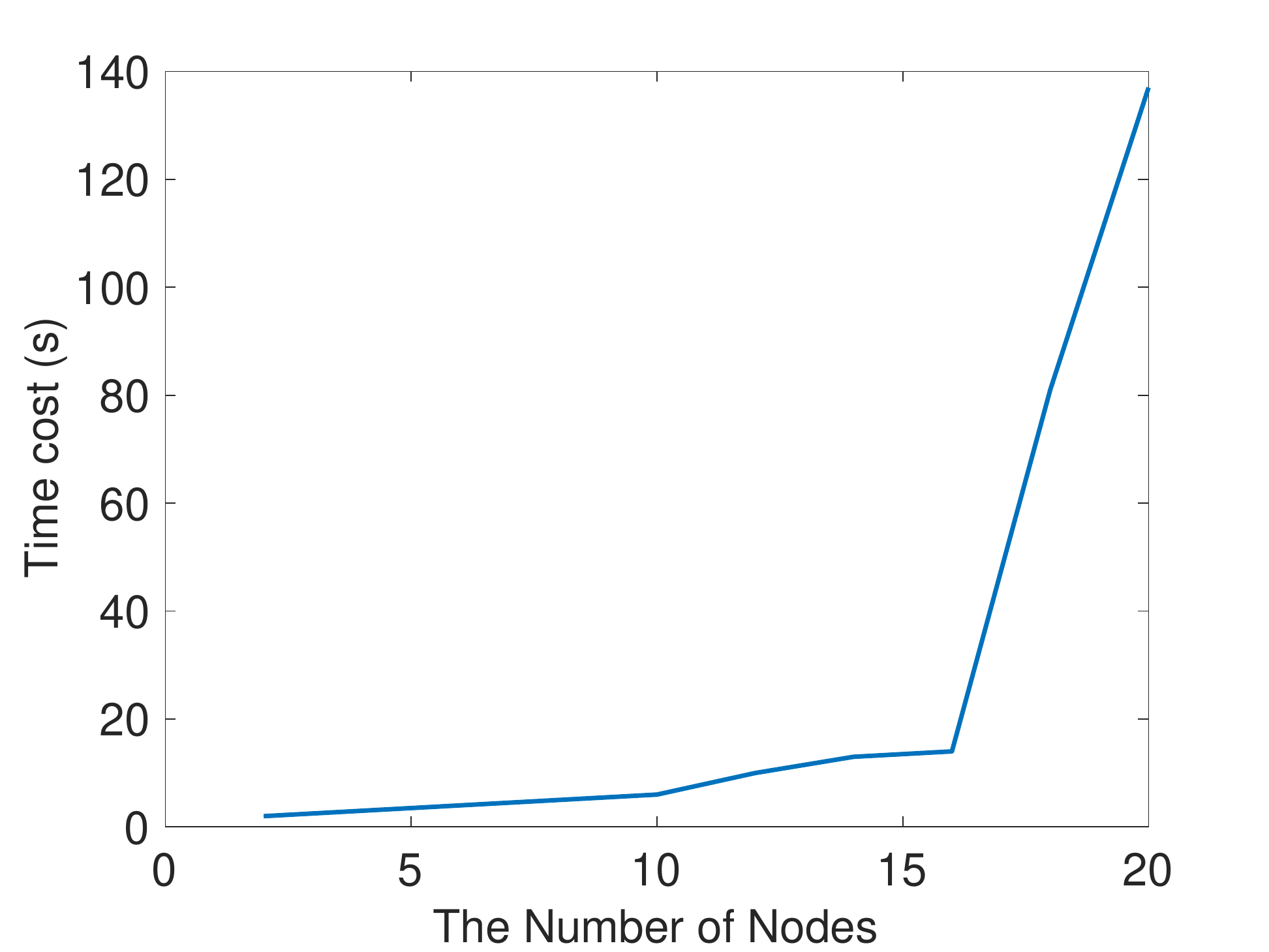}}
\caption{Impacts of the numbers of edge servers and blockchain nodes on the efficiency of the blockchain consensus.}
\label{fig:node}
\end{figure}

\subsection{ZD-based Payment Settlement}
To figure out whether the proposed ZD-based payment settlement strategy can function effectively, we compare it with other two classical strategies, named tit-for-tat (TFT) and win-stay-lose-shift (WSLS). With the TFT strategy, the leader will choose the action that the requestor adopted in the last round; with the WSLS strategy, the leader keeps performing an action if it brings a high payoff while changes to the other one if it results in a low payoff in the last round. 
Besides, we set the initial cooperation probability of the requestor as different values to further indicate the robustness of the proposed scheme. Specifically, we denote the requestor's initial cooperation probability as $q^0$ and study the evolution of the requestor's cooperation probability with respect to $q_1$ or $q_2$ according to the action of the leader in the current round. Main parameters in this scheme are payoff vectors of two player, which are set as $\mathbf{x} = (6,3,8,5)$ and $\mathbf{y} = (6,8,3,5)$ in the simulations.

From Fig. \ref{fig:ex3_probability}, one can see that the proposed ZD strategy can enable the cooperation probability of the requestor to gradually approach 1 and stay cooperative. While the other two classical strategies fail to achieve this job. Further, as presented in four subfigures, no matter how the initial cooperation probability of the requestor changes, the ZD-based scheme can always function successfully with respect to enforcing its full cooperation behavior, i.e., paying the full amount of incentive to the blockchain system. Thus, the effectiveness and robustness of our proposed scheme are validated.

\begin{figure}[htbp]
\centering
\subfigure[$q^0=0.10$.]{
\includegraphics[width=0.23\textwidth]{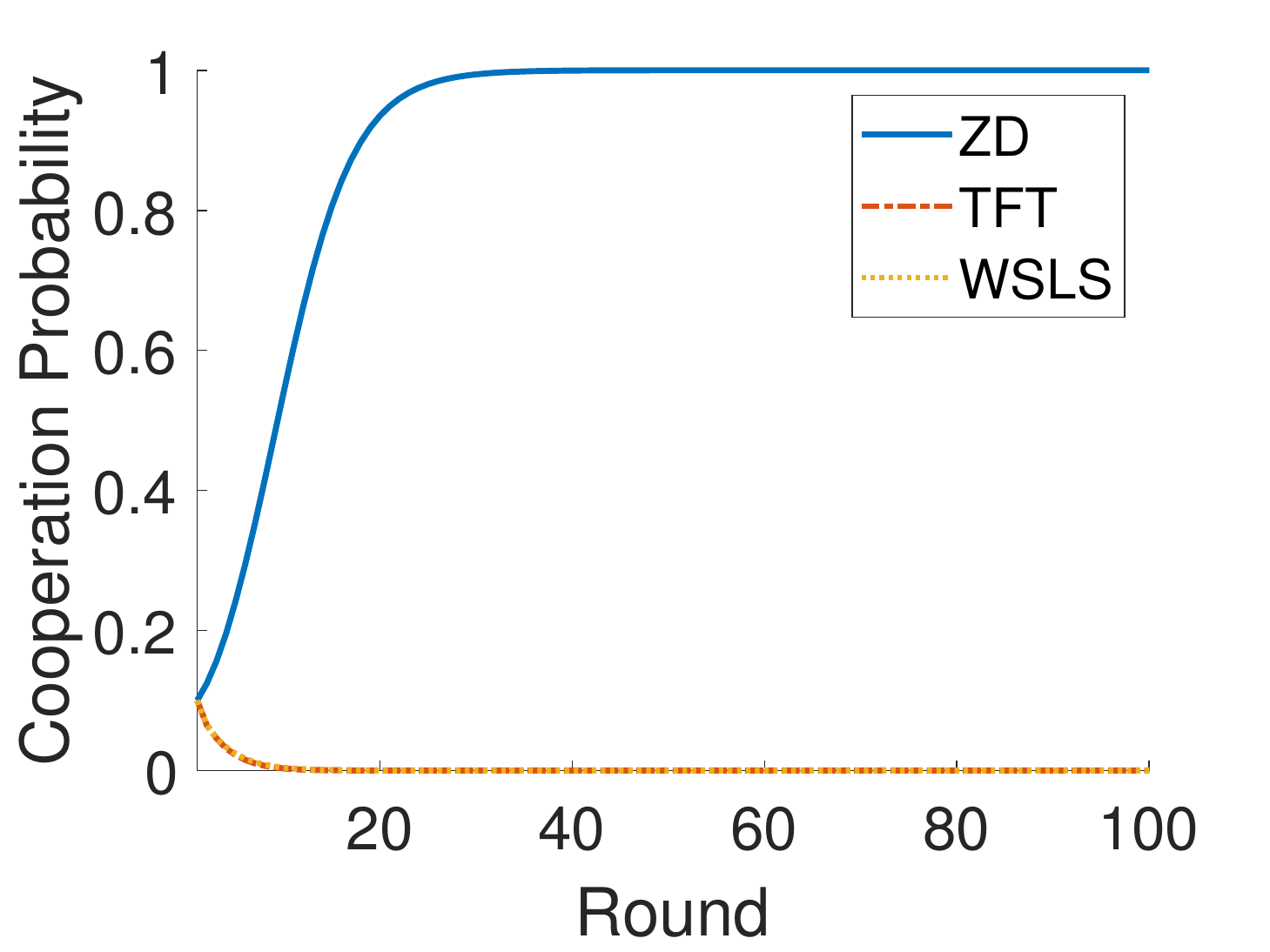}}
\subfigure[$q^0=0.40$.]{
\includegraphics[width=0.23\textwidth]{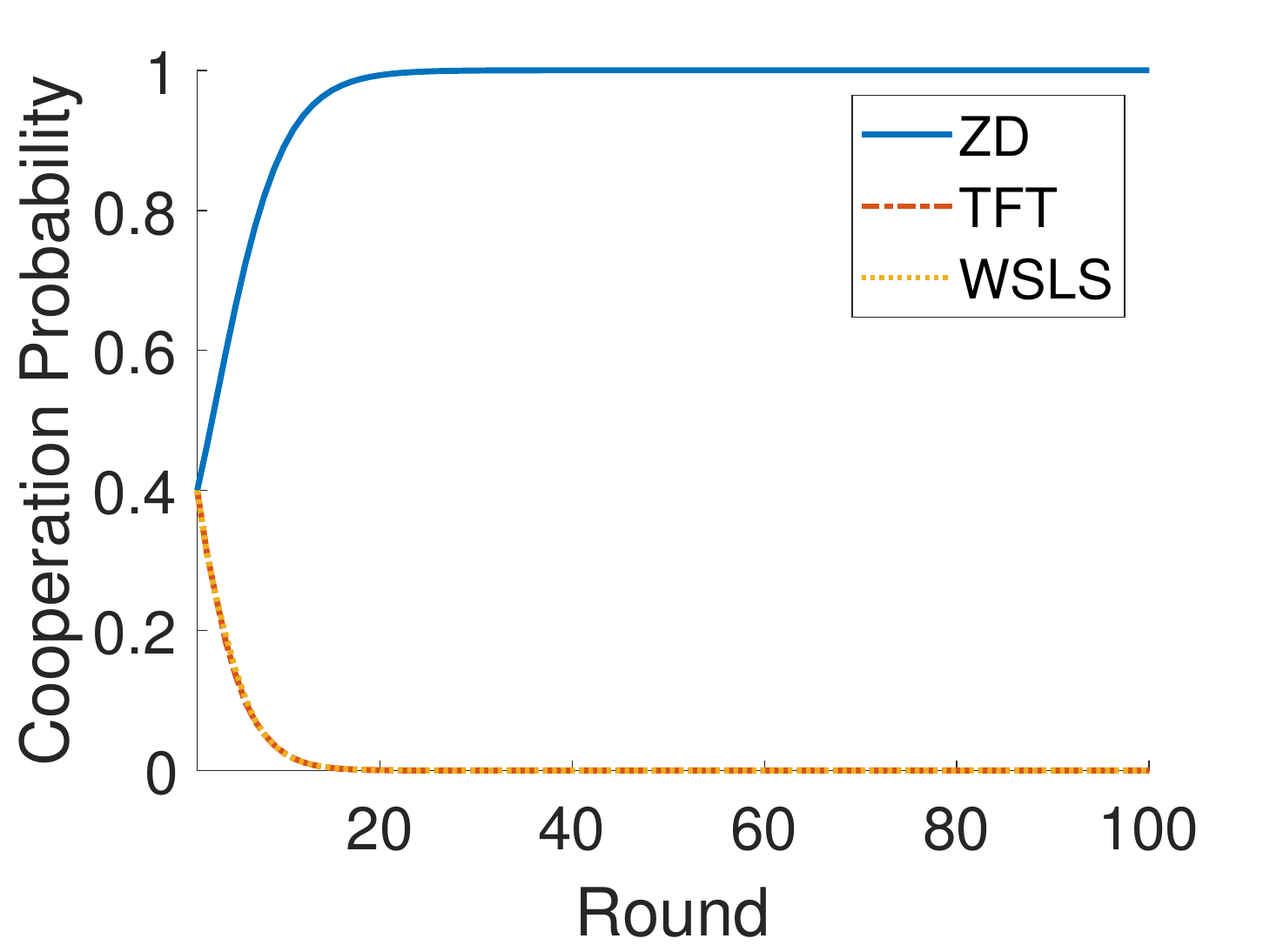}}
\subfigure[$q^0=0.60$.]{
\includegraphics[width=0.23\textwidth]{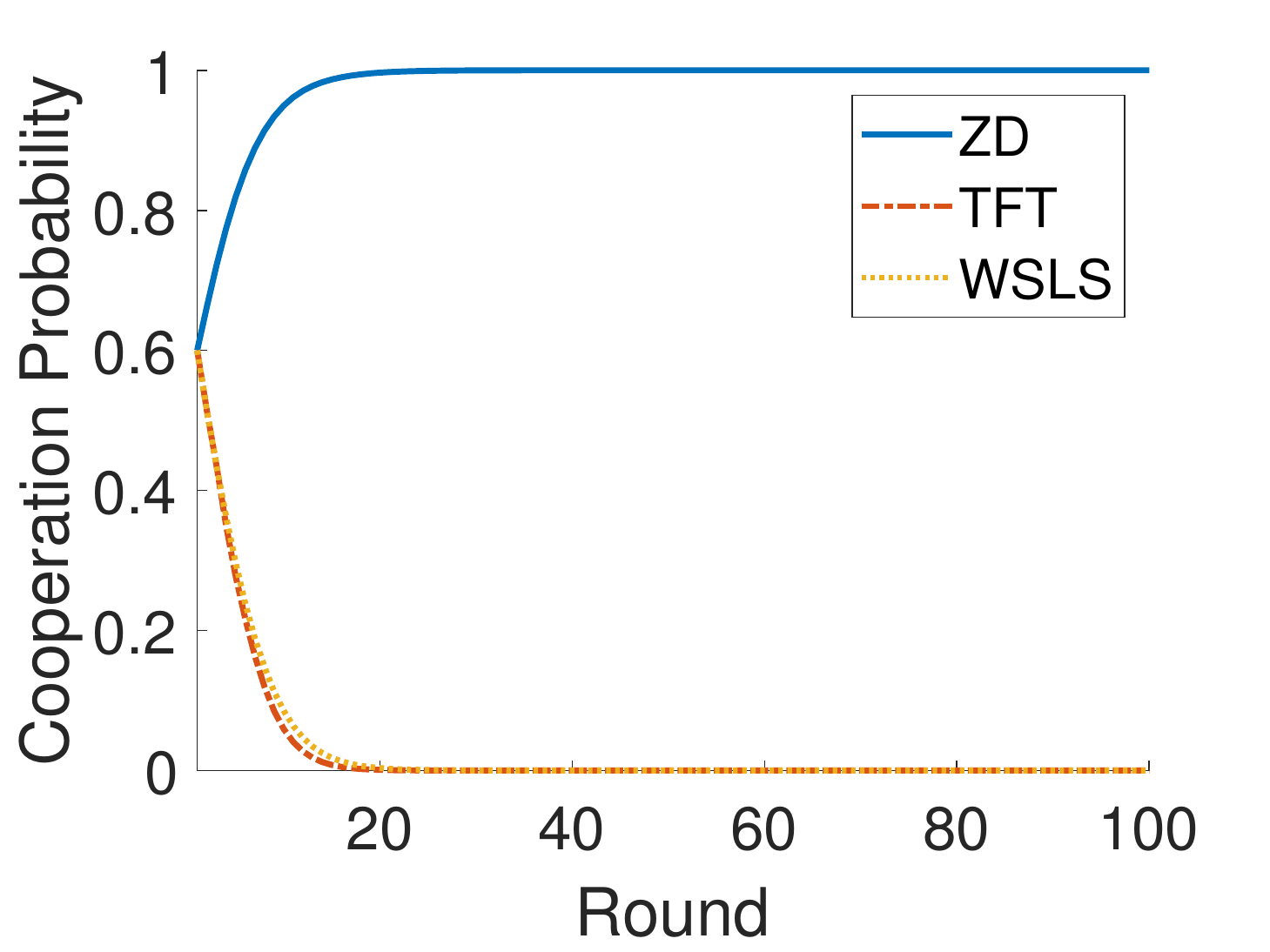}}
\subfigure[$q^0=0.90$.]{
\includegraphics[width=0.23\textwidth]{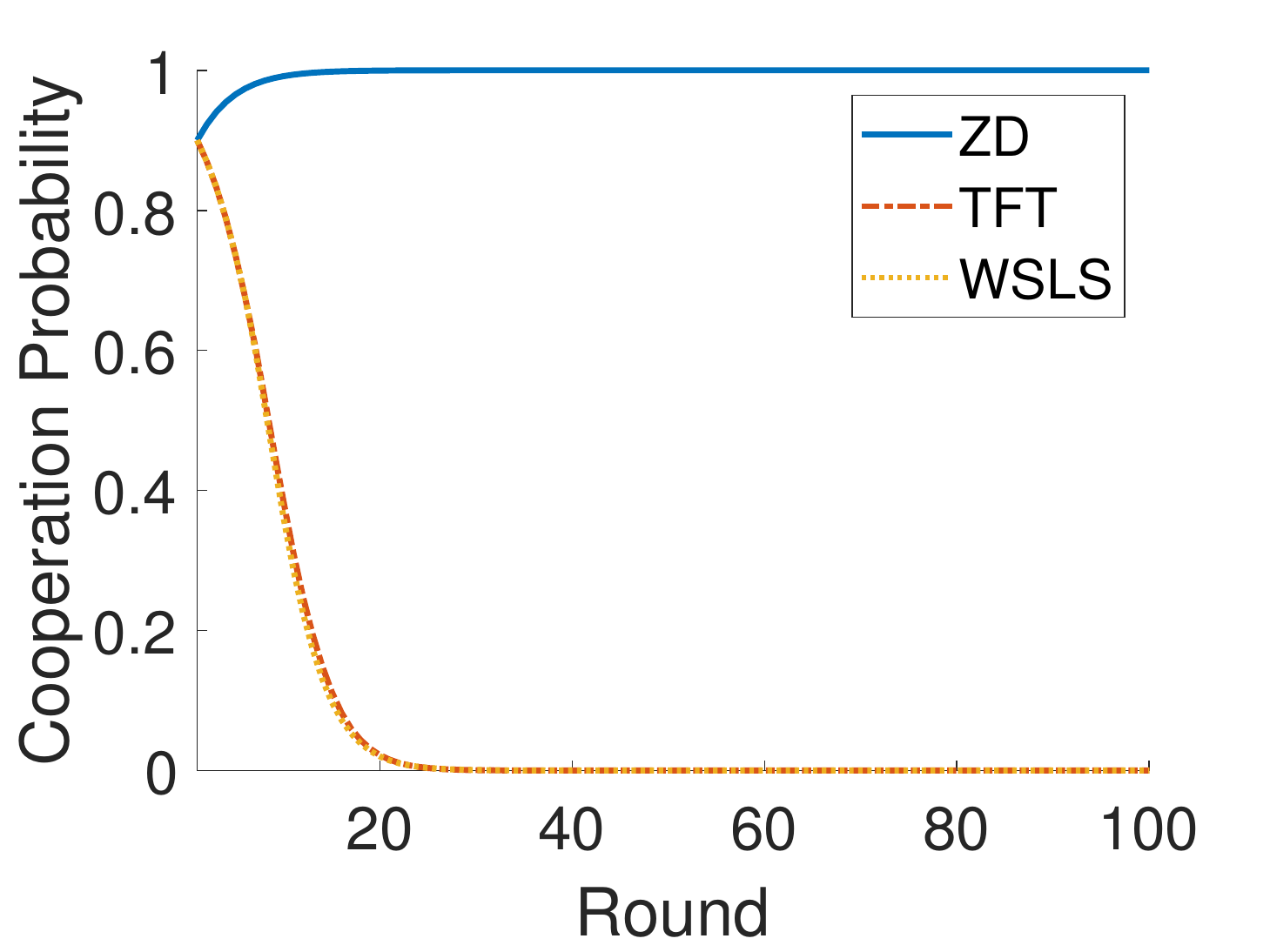}}
\caption{Cooperation probability of the requestor given different strategies of the leader.}
\label{fig:ex3_probability}
\end{figure}

Next, we explore the utilities of both the leader and requestor under the function of the proposed ZD-based scheme. As shown in Fig. \ref{fig:ex3_payoff}, we present the utilities at the stable state in the left bar graph and the evolution of utilities in the right-side line graphs. Similar to the previous experiment, we examine the results with different initial cooperation probabilities of the requestor. As can been seen from the left subfigure of Fig. \ref{fig:ex3_payoff}, 
at the final state with stable strategies of the requestor and the leader, both can obtain similar utility around 6 no matter how cooperative the requestor is at the beginning, which is exactly the utility at the mutual cooperation state. This implies that the proposed scheme is fair with respect to the final utility. From four line figures, one can observe that with different initial cooperation probabilities of the requestor, the dynamic evolution paths of both the leader and the requestor are generally similar with slight difference, where the higher the initial cooperation probability, the faster both sides to achieve the maximized utilities at the stable state. This is consistent with the fact that the more cooperative the requestor, the easier to drive its full cooperation at the stable state.

\begin{figure}[!htbp]
\centering
\subfigure{
\begin{minipage}{0.415\linewidth}
\includegraphics[width=1\textwidth]{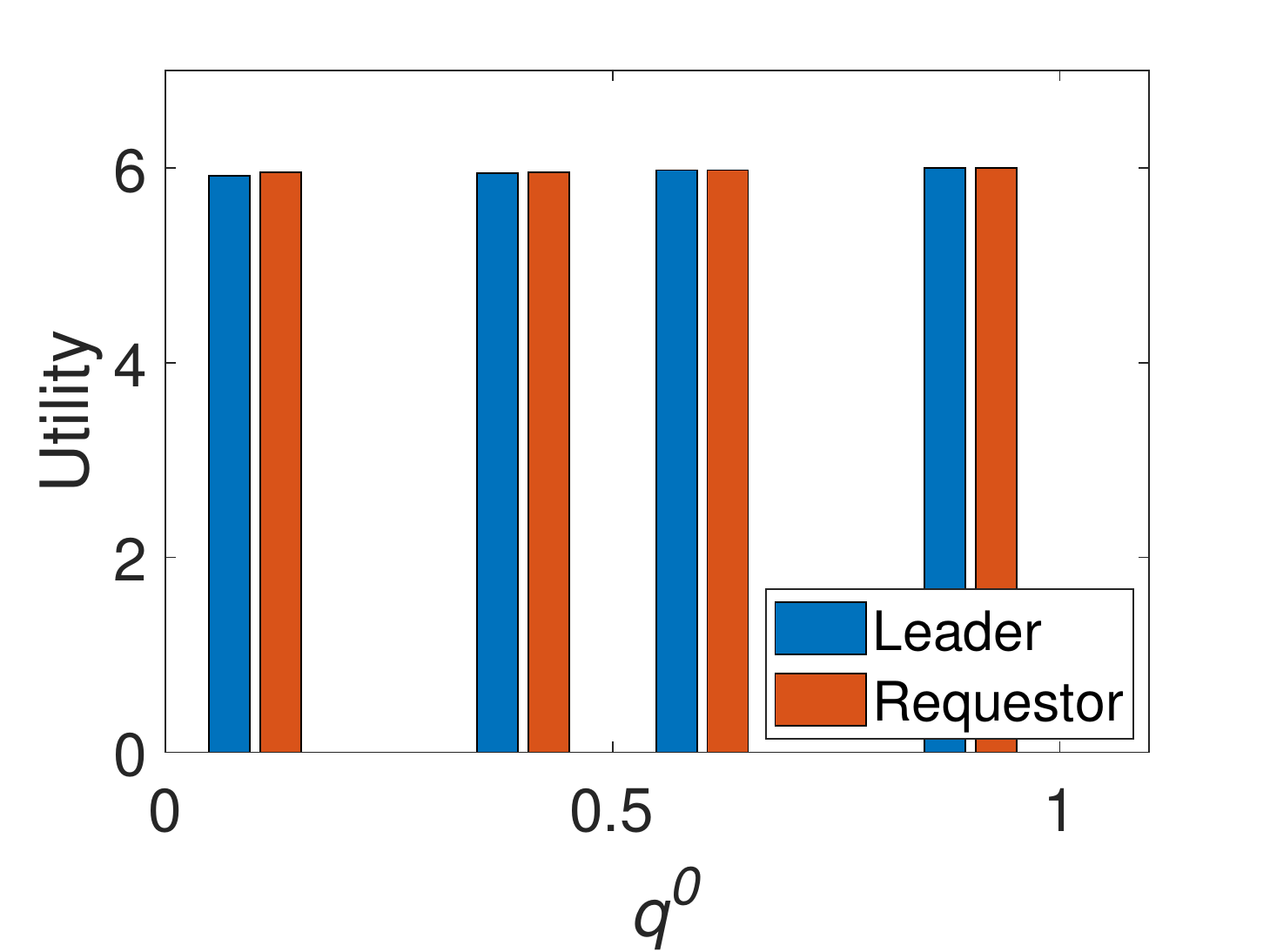}
\end{minipage}
}
\subfigure{
\begin{minipage}{0.25\linewidth}
\includegraphics[width=1\textwidth]{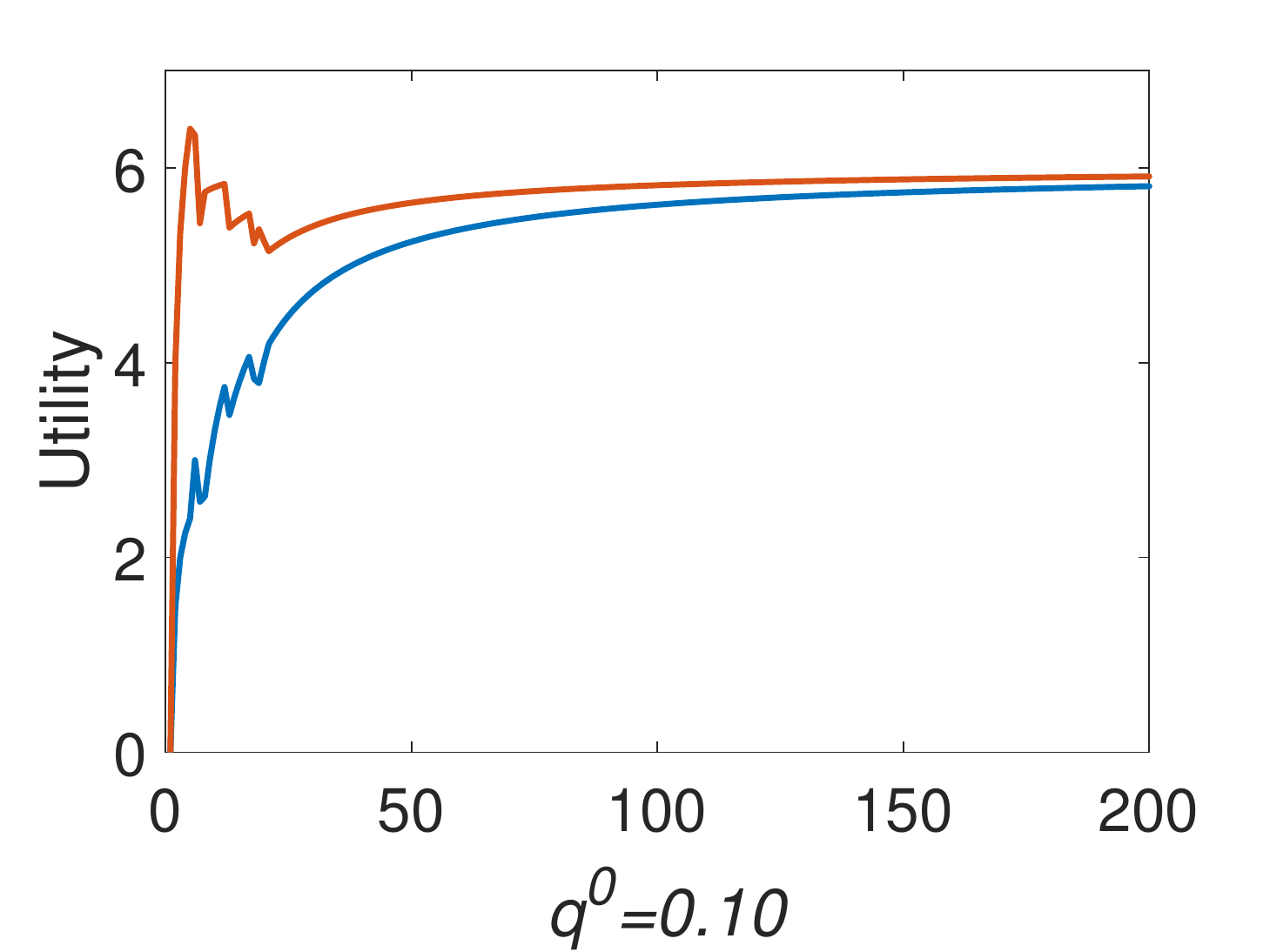}\\
\includegraphics[width=1\textwidth]{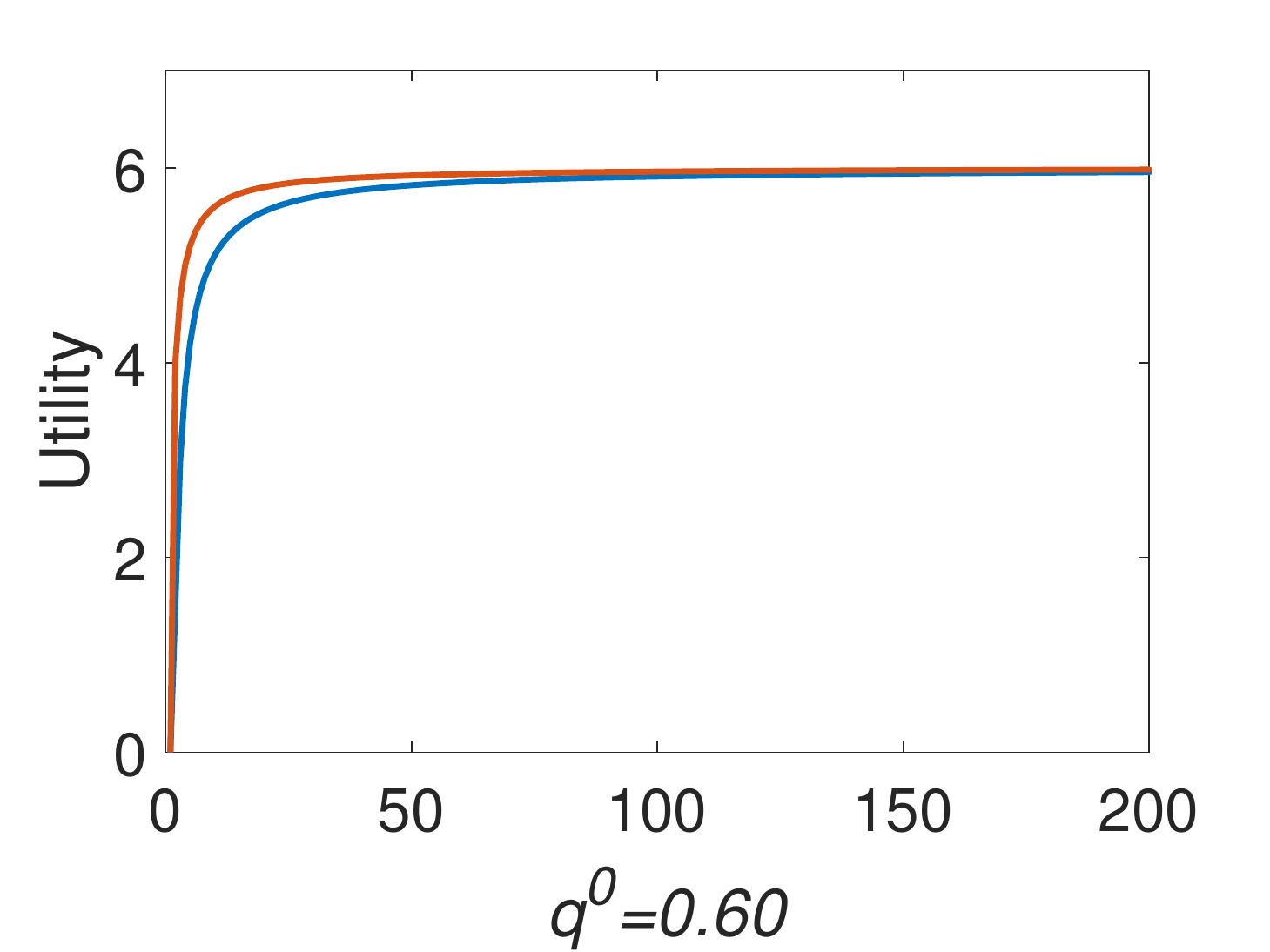}
\end{minipage}

\begin{minipage}{0.25\linewidth}
\includegraphics[width=1\textwidth]{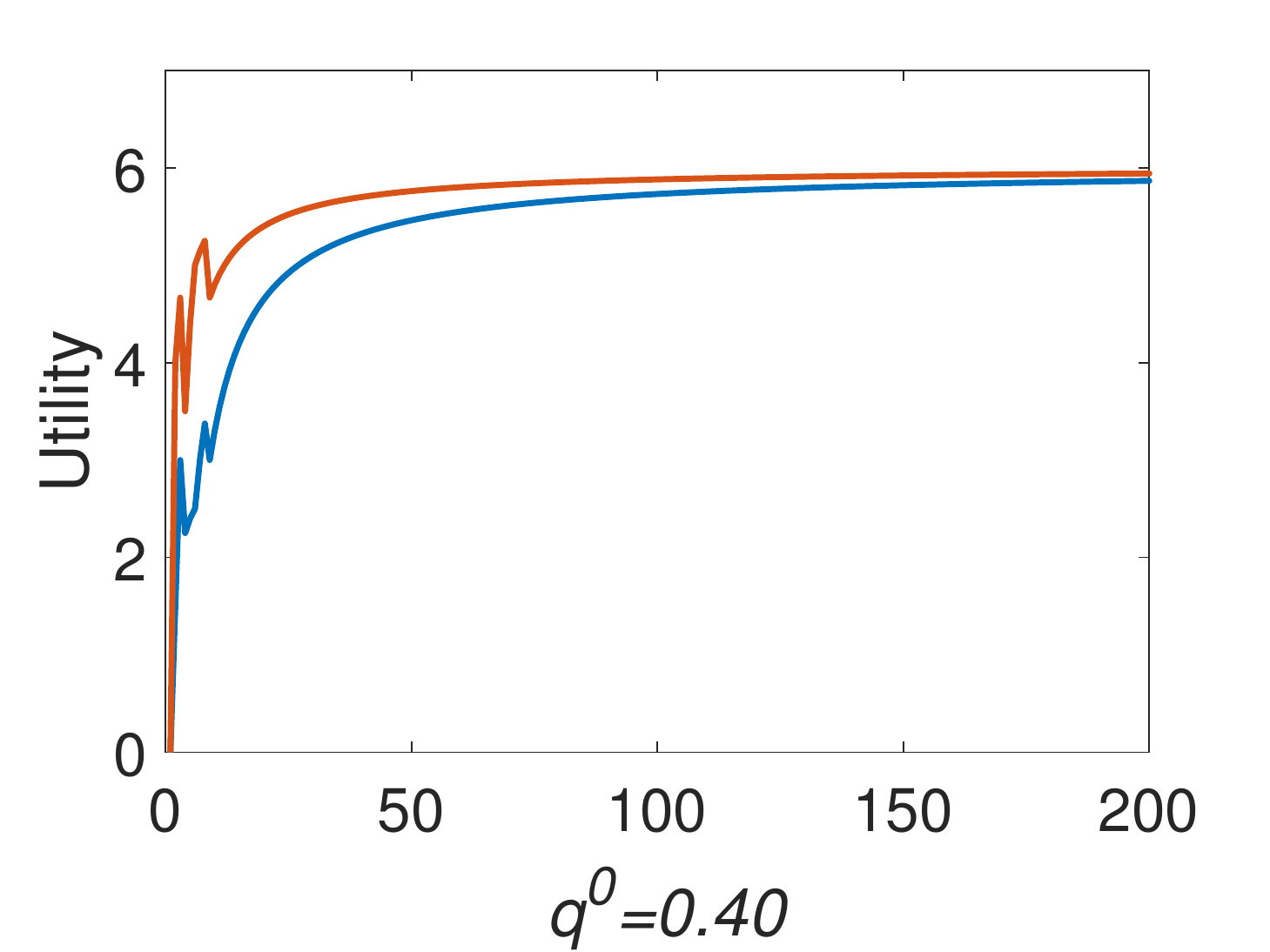}\\
\includegraphics[width=1\textwidth]{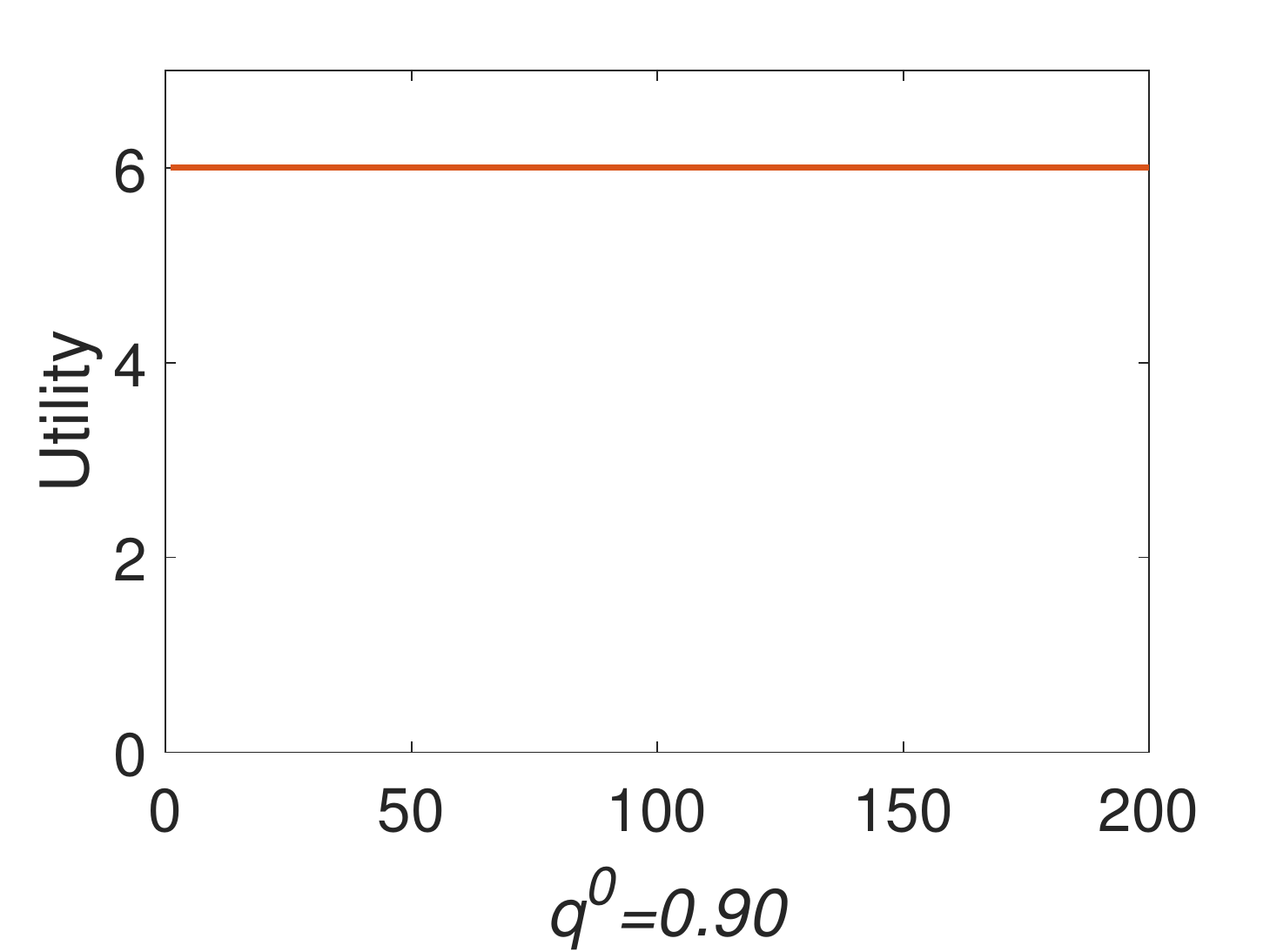}
\end{minipage}
}
\centering
\caption{Utilities of the leader and the requestor at the stable and dynamic states with the leader adopting the ZD strategy.}
\label{fig:ex3_payoff}
\end{figure}

\section{Conclusion and Future Work}\label{sec:conclusion}
Although MCS has been applied to various fields and brings significant benefits to the whole society, the current MCS paradigm has high requirements on the communication, storage and computing capabilities of requestors, limiting the widespread application of MCS to go a step further. To overcome this challenge, we propose a new MCS learning framework based on the consortium blockchain and the edge-participated FL, functioning among four major entities, i.e., requestors, blockchain, edge servers and mobile devices. Despite existing studies on blockchain for MCS and blockchain-based FL, they fail to lower the requirements on requestors' capabilities or cannot be directly applied to MCS. Thus, our proposed MCS learning framework fills the gaps with four main procedures, i.e., task publication, data sensing and submission, learning to return final results, and payment settlement and allocation. We design specific schemes in main steps to address challenges resulted from malicious edge servers, dishonest requestors, and even outside attacks.
Experimental results demonstrate the effectiveness of our designed schemes.

In our future work, we will make efforts to extensively study the employed blockchain in our proposed MCS learning framework to further improve the efficiency, scalability and reliability; besides, we will also investigate security challenges in blockchain-based distributed learning during the model training process, which may benefit not only our proposed MCS learning system but also other applications involving extensive data collection and analysis.



\bibliographystyle{IEEEtran}
\bibliography{reference}

\begin{thebibliography}{10}
\providecommand{\url}[1]{#1}
\csname url@samestyle\endcsname
\providecommand{\newblock}{\relax}
\providecommand{\bibinfo}[2]{#2}
\providecommand{\BIBentrySTDinterwordspacing}{\spaceskip=0pt\relax}
\providecommand{\BIBentryALTinterwordstretchfactor}{4}
\providecommand{\BIBentryALTinterwordspacing}{\spaceskip=\fontdimen2\font plus
\BIBentryALTinterwordstretchfactor\fontdimen3\font minus
  \fontdimen4\font\relax}
\providecommand{\BIBforeignlanguage}[2]{{%
\expandafter\ifx\csname l@#1\endcsname\relax
\typeout{** WARNING: IEEEtran.bst: No hyphenation pattern has been}%
\typeout{** loaded for the language `#1'. Using the pattern for}%
\typeout{** the default language instead.}%
\else
\language=\csname l@#1\endcsname
\fi
#2}}
\providecommand{\BIBdecl}{\relax}
\BIBdecl

\bibitem{nguyen2021federated}
D.~C. Nguyen, M.~Ding, Q.-V. Pham, P.~N. Pathirana, L.~B. Le, A.~Seneviratne,
  J.~Li, D.~Niyato, and H.~V. Poor, ``Federated learning meets blockchain in
  edge computing: Opportunities and challenges,'' \emph{IEEE Internet of Things
  Journal}, 2021.

\bibitem{farkas2015crowdsending}
K.~Farkas, G.~Feher, A.~Benczur, and C.~Sidlo, ``Crowdsending based public
  transport information service in smart cities,'' \emph{IEEE Communications
  Magazine}, vol.~53, no.~8, pp. 158--165, 2015.

\bibitem{wang2018crowdnavi}
Q.~Wang, B.~Guo, Y.~Liu, Q.~Han, T.~Xin, and Z.~Yu, ``Crowdnavi: Last-mile
  outdoor navigation for pedestrians using mobile crowdsensing,''
  \emph{Proceedings of the ACM on Human-Computer Interaction}, vol.~2, no.
  CSCW, pp. 1--23, 2018.

\bibitem{hu2019solving}
Q.~Hu, S.~Wang, X.~Cheng, L.~Ma, and R.~Bie, ``Solving the crowdsourcing
  dilemma using the zero-determinant strategies,'' \emph{IEEE Transactions on
  Information Forensics and Security}, vol.~15, pp. 1778--1789, 2019.

\bibitem{wang2017efficient}
E.~Wang, Y.~Yang, J.~Wu, W.~Liu, and X.~Wang, ``An efficient prediction-based
  user recruitment for mobile crowdsensing,'' \emph{IEEE Transactions on Mobile
  Computing}, vol.~17, no.~1, pp. 16--28, 2017.

\bibitem{wang2020towards}
Z.~Wang, J.~Zhao, J.~Hu, T.~Zhu, Q.~Wang, J.~Ren, and C.~Li, ``Towards
  personalized task-oriented worker recruitment in mobile crowdsensing,''
  \emph{IEEE Transactions on Mobile Computing}, 2020.

\bibitem{hu2019quality}
Q.~Hu, S.~Wang, P.~Ma, X.~Cheng, W.~Lv, and R.~Bie, ``Quality control in
  crowdsourcing using sequential zero-determinant strategies,'' \emph{IEEE
  Transactions on Knowledge and Data Engineering}, vol.~32, no.~5, pp.
  998--1009, 2019.

\bibitem{zhao2020pace}
B.~Zhao, S.~Tang, X.~Liu, and X.~Zhang, ``Pace: privacy-preserving and
  quality-aware incentive mechanism for mobile crowdsensing,'' \emph{IEEE
  Transactions on Mobile Computing}, 2020.

\bibitem{wei2019blockchain}
X.~Wei, Y.~Yan, W.~Jiang, J.~Shen, and X.~Qiu, ``A blockchain based mobile
  crowdsensing market,'' \emph{China Communications}, vol.~16, no.~6, pp.
  31--41, 2019.

\bibitem{huang2020blockchain}
J.~Huang, L.~Kong, H.-N. Dai, W.~Ding, L.~Cheng, G.~Chen, X.~Jin, and P.~Zeng,
  ``Blockchain based mobile crowd sensing in industrial systems,'' \emph{IEEE
  Transactions on Industrial Informatics}, 2020.

\bibitem{hu2020blockchain}
J.~Hu, K.~Yang, K.~Wang, and K.~Zhang, ``A blockchain-based reward mechanism
  for mobile crowdsensing,'' \emph{IEEE Transactions on Computational Social
  Systems}, vol.~7, no.~1, pp. 178--191, 2020.

\bibitem{wei2020blockchain}
L.~Wei, J.~Wu, and C.~Long, ``A blockchain-based hybrid incentive model for
  crowdsensing,'' \emph{Electronics}, vol.~9, no.~2, p. 215, 2020.

\bibitem{kadadha2020sensechain}
M.~Kadadha, H.~Otrok, R.~Mizouni, S.~Singh, and A.~Ouali, ``Sensechain: A
  blockchain-based crowdsensing framework for multiple requesters and multiple
  workers,'' \emph{Future Generation Computer Systems}, vol. 105, pp. 650--664,
  2020.

\bibitem{liang2020faircs}
Y.~Liang, Y.~Li, and B.-S. Shin, ``Faircs—blockchain-based fair crowdsensing
  scheme using trusted execution environment,'' \emph{Sensors}, vol.~20,
  no.~11, p. 3172, 2020.

\bibitem{cai2019towards}
C.~Cai, Y.~Zheng, Y.~Du, Z.~Qin, and C.~Wang, ``Towards private, robust, and
  verifiable crowdsensing systems via public blockchains,'' \emph{IEEE
  Transactions on Dependable and Secure Computing}, 2019.

\bibitem{an2018crowdsensing}
J.~An, D.~Liang, X.~Gui, H.~Yang, R.~Gui, and X.~He, ``Crowdsensing quality
  control and grading evaluation based on a two-consensus blockchain,''
  \emph{IEEE Internet of Things Journal}, vol.~6, no.~3, pp. 4711--4718, 2018.

\bibitem{an2020lightweight}
J.~An, J.~Cheng, X.~Gui, W.~Zhang, D.~Liang, R.~Gui, L.~Jiang, and D.~Liao, ``A
  lightweight blockchain-based model for data quality assessment in
  crowdsensing,'' \emph{IEEE Transactions on Computational Social Systems},
  vol.~7, no.~1, pp. 84--97, 2020.

\bibitem{chatzopoulos2018privacy}
D.~Chatzopoulos, S.~Gujar, B.~Faltings, and P.~Hui, ``Privacy preserving and
  cost optimal mobile crowdsensing using smart contracts on blockchain,'' in
  \emph{2018 IEEE 15th International Conference on Mobile Ad Hoc and Sensor
  Systems (MASS)}.\hskip 1em plus 0.5em minus 0.4em\relax IEEE, 2018, pp.
  442--450.

\bibitem{wang2018blockchain}
J.~Wang, M.~Li, Y.~He, H.~Li, K.~Xiao, and C.~Wang, ``A blockchain based
  privacy-preserving incentive mechanism in crowdsensing applications,''
  \emph{IEEE Access}, vol.~6, pp. 17\,545--17\,556, 2018.

\bibitem{gao2021trustworker}
S.~Gao, X.~Chen, J.~Zhu, X.~Dong, and J.~Ma, ``Trustworker: A trustworthy and
  privacy-preserving worker selection scheme for blockchain-based
  crowdsensing,'' \emph{IEEE Transactions on Services Computing}, 2021.

\bibitem{xu2021urim}
Z.~Xu, C.~Liu, P.~Zhang, T.~Lu, and N.~Gu, ``Urim: Utility-oriented
  role-centric incentive mechanism design for blockchain-based crowdsensing,''
  in \emph{International Conference on Database Systems for Advanced
  Applications}.\hskip 1em plus 0.5em minus 0.4em\relax Springer, 2021, pp.
  358--374.

\bibitem{zou2019crowdblps}
S.~Zou, J.~Xi, H.~Wang, and G.~Xu, ``Crowdblps: A blockchain-based
  location-privacy-preserving mobile crowdsensing system,'' \emph{IEEE
  Transactions on Industrial Informatics}, vol.~16, no.~6, pp. 4206--4218,
  2019.

\bibitem{yang2019blockchain}
M.~Yang, T.~Zhu, K.~Liang, W.~Zhou, and R.~H. Deng, ``A blockchain-based
  location privacy-preserving crowdsensing system,'' \emph{Future Generation
  Computer Systems}, vol.~94, pp. 408--418, 2019.

\bibitem{zhang2019vesenchain}
J.~Zhang, X.~Huang, W.~Ni, M.~Wu, and R.~Yu, ``Vesenchain: Leveraging
  consortium blockchain for secure and efficient vehicular crowdsensing,'' in
  \emph{2019 Chinese Control Conference (CCC)}.\hskip 1em plus 0.5em minus
  0.4em\relax IEEE, 2019, pp. 6339--6344.

\bibitem{wang2020blockchain}
J.~Wang, X.~Feng, T.~Xu, H.~Ning, and T.~Qiu, ``Blockchain based model for
  nondeterministic crowdsensing strategy with vehicular team-cooperation,''
  \emph{IEEE Internet of Things Journal}, 2020.

\bibitem{feng2019dynamic}
S.~Feng, W.~Wang, D.~Niyato, D.~I. Kim, and P.~Wang, ``Dynamic sensor renting
  in rf-powered crowdsensing service market with blockchain,'' in \emph{2019
  IEEE Wireless Communications and Networking Conference (WCNC)}.\hskip 1em
  plus 0.5em minus 0.4em\relax IEEE, 2019, pp. 1--7.

\bibitem{noshad2019blockchain}
Z.~Noshad, A.~Javaid, M.~Zahid, I.~Ali, N.~Javaid \emph{et~al.}, ``A blockchain
  based incentive mechanism for crowd sensing network,'' in \emph{International
  Conference on P2P, Parallel, Grid, Cloud and Internet Computing}.\hskip 1em
  plus 0.5em minus 0.4em\relax Springer, 2019, pp. 568--578.

\bibitem{gu2019using}
X.~Gu, J.~Peng, W.~Yu, Y.~Cheng, F.~Jiang, X.~Zhang, Z.~Huang, and L.~Cai,
  ``Using blockchain to enhance the security of fog-assisted crowdsensing
  systems,'' in \emph{2019 IEEE 28th International Symposium on Industrial
  Electronics (ISIE)}.\hskip 1em plus 0.5em minus 0.4em\relax IEEE, 2019, pp.
  1859--1864.

\bibitem{jia2018blockchain}
B.~Jia, T.~Zhou, W.~Li, Z.~Liu, and J.~Zhang, ``A blockchain-based location
  privacy protection incentive mechanism in crowd sensing networks,''
  \emph{Sensors}, vol.~18, no.~11, p. 3894, 2018.

\bibitem{zhao2019dynamic}
K.~Zhao, S.~Tang, B.~Zhao, and Y.~Wu, ``Dynamic and privacy-preserving
  reputation management for blockchain-based mobile crowdsensing,'' \emph{IEEE
  Access}, vol.~7, pp. 74\,694--74\,710, 2019.

\bibitem{bao2019flchain}
X.~Bao, C.~Su, Y.~Xiong, W.~Huang, and Y.~Hu, ``Flchain: A blockchain for
  auditable federated learning with trust and incentive,'' in \emph{2019 5th
  International Conference on Big Data Computing and Communications
  (BIGCOM)}.\hskip 1em plus 0.5em minus 0.4em\relax IEEE, 2019, pp. 151--159.

\bibitem{majeed2019flchain}
U.~Majeed and C.~S. Hong, ``Flchain: Federated learning via mec-enabled
  blockchain network,'' in \emph{2019 20th Asia-Pacific Network Operations and
  Management Symposium (APNOMS)}.\hskip 1em plus 0.5em minus 0.4em\relax IEEE,
  2019, pp. 1--4.

\bibitem{kumar2021blockchain}
R.~Kumar, A.~A. Khan, J.~Kumar, A.~Zakria, N.~A. Golilarz, S.~Zhang, Y.~Ting,
  C.~Zheng, and W.~Wang, ``Blockchain-federated-learning and deep learning
  models for covid-19 detection using ct imaging,'' \emph{IEEE Sensors
  Journal}, 2021.

\bibitem{qi2021privacy}
Y.~Qi, M.~S. Hossain, J.~Nie, and X.~Li, ``Privacy-preserving blockchain-based
  federated learning for traffic flow prediction,'' \emph{Future Generation
  Computer Systems}, vol. 117, pp. 328--337, 2021.

\bibitem{kim2019blockchained}
H.~Kim, J.~Park, M.~Bennis, and S.-L. Kim, ``Blockchained on-device federated
  learning,'' \emph{IEEE Communications Letters}, vol.~24, no.~6, pp.
  1279--1283, 2019.

\bibitem{pokhrel2020decentralized}
S.~R. Pokhrel and J.~Choi, ``A decentralized federated learning approach for
  connected autonomous vehicles,'' in \emph{2020 IEEE Wireless Communications
  and Networking Conference Workshops (WCNCW)}.\hskip 1em plus 0.5em minus
  0.4em\relax IEEE, 2020, pp. 1--6.

\bibitem{pokhrel2020federated}
------, ``Federated learning with blockchain for autonomous vehicles: Analysis
  and design challenges,'' \emph{IEEE Transactions on Communications}, 2020.

\bibitem{toyoda2019mechanism}
K.~Toyoda and A.~N. Zhang, ``Mechanism design for an incentive-aware
  blockchain-enabled federated learning platform,'' in \emph{2019 IEEE
  International Conference on Big Data (Big Data)}.\hskip 1em plus 0.5em minus
  0.4em\relax IEEE, 2019, pp. 395--403.

\bibitem{weng2019deepchain}
J.~Weng, J.~Weng, J.~Zhang, M.~Li, Y.~Zhang, and W.~Luo, ``Deepchain: Auditable
  and privacy-preserving deep learning with blockchain-based incentive,''
  \emph{IEEE Transactions on Dependable and Secure Computing}, 2019.

\bibitem{lu2019blockchain}
Y.~Lu, X.~Huang, Y.~Dai, S.~Maharjan, and Y.~Zhang, ``Blockchain and federated
  learning for privacy-preserved data sharing in industrial iot,'' \emph{IEEE
  Transactions on Industrial Informatics}, vol.~16, no.~6, pp. 4177--4186,
  2019.

\bibitem{kim2019blockchain}
Y.~J. Kim and C.~S. Hong, ``Blockchain-based node-aware dynamic weighting
  methods for improving federated learning performance,'' in \emph{2019 20th
  Asia-Pacific Network Operations and Management Symposium (APNOMS)}.\hskip 1em
  plus 0.5em minus 0.4em\relax IEEE, 2019, pp. 1--4.

\bibitem{preuveneers2018chained}
D.~Preuveneers, V.~Rimmer, I.~Tsingenopoulos, J.~Spooren, W.~Joosen, and
  E.~Ilie-Zudor, ``Chained anomaly detection models for federated learning: An
  intrusion detection case study,'' \emph{Applied Sciences}, vol.~8, no.~12, p.
  2663, 2018.

\bibitem{lu2020blockchain}
Y.~Lu, X.~Huang, K.~Zhang, S.~Maharjan, and Y.~Zhang, ``Blockchain empowered
  asynchronous federated learning for secure data sharing in internet of
  vehicles,'' \emph{IEEE Transactions on Vehicular Technology}, vol.~69, no.~4,
  pp. 4298--4311, 2020.

\bibitem{kang2019incentive}
J.~Kang, Z.~Xiong, D.~Niyato, S.~Xie, and J.~Zhang, ``Incentive mechanism for
  reliable federated learning: A joint optimization approach to combining
  reputation and contract theory,'' \emph{IEEE Internet of Things Journal},
  vol.~6, no.~6, pp. 10\,700--10\,714, 2019.

\bibitem{awan2019poster}
S.~Awan, F.~Li, B.~Luo, and M.~Liu, ``Poster: A reliable and accountable
  privacy-preserving federated learning framework using the blockchain,'' in
  \emph{Proceedings of the 2019 ACM SIGSAC Conference on Computer and
  Communications Security}, 2019, pp. 2561--2563.

\bibitem{desai2021blockfla}
H.~B. Desai, M.~S. Ozdayi, and M.~Kantarcioglu, ``Blockfla: Accountable
  federated learning via hybrid blockchain architecture,'' in \emph{Proceedings
  of the Eleventh ACM Conference on Data and Application Security and Privacy},
  2021, pp. 101--112.

\bibitem{benet2014ipfs}
J.~Benet, ``Ipfs-content addressed, versioned, p2p file system,'' \emph{arXiv
  preprint arXiv:1407.3561}, 2014.

\bibitem{hurwicz2006designing}
L.~Hurwicz and S.~Reiter, \emph{Designing economic mechanisms}.\hskip 1em plus
  0.5em minus 0.4em\relax Cambridge University Press, 2006.

\bibitem{yin2019hotstuff}
M.~Yin, D.~Malkhi, M.~K. Reiter, G.~G. Gueta, and I.~Abraham, ``Hotstuff: Bft
  consensus with linearity and responsiveness,'' in \emph{Proceedings of the
  2019 ACM Symposium on Principles of Distributed Computing}, 2019, pp.
  347--356.

\bibitem{castro1999practical}
M.~Castro, B.~Liskov \emph{et~al.}, ``Practical byzantine fault tolerance,'' in
  \emph{OSDI}, vol.~99, no. 1999, 1999, pp. 173--186.

\bibitem{hao2018payoff}
D.~Hao, K.~Li, and T.~Zhou, ``Payoff control in the iterated prisoner's
  dilemma,'' in \emph{Proceedings of the 27th International Joint Conference on
  Artificial Intelligence}, 2018, pp. 296--302.

\bibitem{caldas2018leaf}
S.~Caldas, P.~Wu, T.~Li, J.~Kone{\v{c}}n{\`y}, H.~B. McMahan, V.~Smith, and
  A.~Talwalkar, ``Leaf: A benchmark for federated settings,'' \emph{arXiv
  preprint arXiv:1812.01097}, 2018.

\bibitem{pbftconsensus}
CyHsiung, ``Pratical byzatine fault tolerance,''
  https://github.com/CyHsiung/Practical-Byzantine-Fault-Tolerance-PBFT-.

\end{thebibliography}

\begin{IEEEbiography}[{\includegraphics[width=1in,height=1.25in,clip,keepaspectratio]{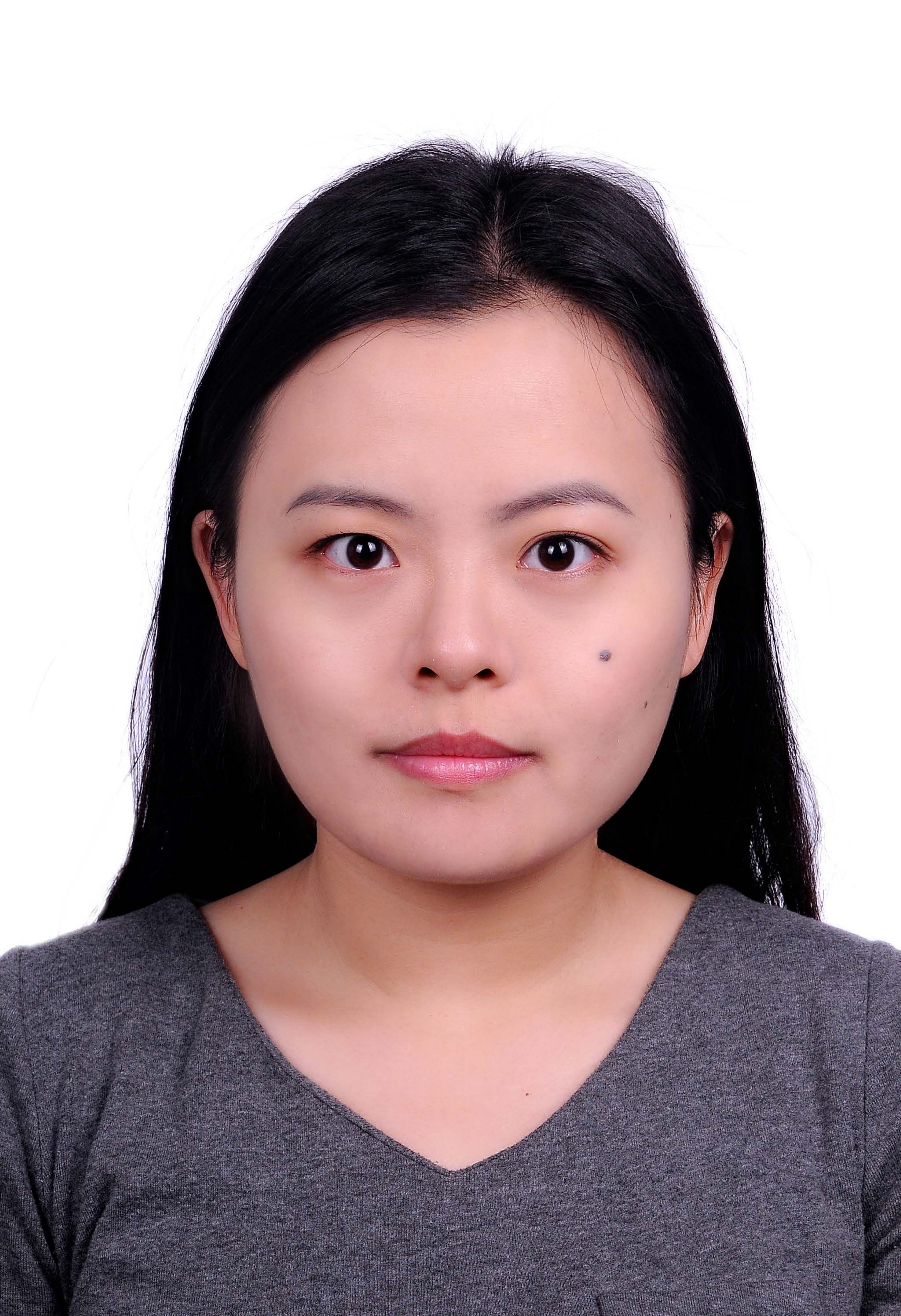}}]{Qin Hu} received her Ph.D. degree in Computer Science from the George Washington University in 2019. She is currently an Assistant Professor with the Department of Computer and Information Science, Indiana University-Purdue University Indianapolis (IUPUI). Her research interests include wireless and mobile security, edge computing, blockchain, and crowdsourcing/crowdsensing.
\end{IEEEbiography}

\begin{IEEEbiography}[{\includegraphics[width=1in,height=1.25in,clip,keepaspectratio]{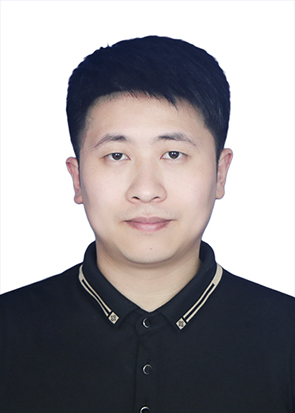}}]{Zhilin Wang} received his B.S. from Nanchang University in 2020. He is currently pursuing his Ph.D. degree of Computer and Information Science In Indiana University-Purdue University Indianapolis (IUPUI). His research interests include blockchain, federated learning, and Internet of Things (IoT).
\end{IEEEbiography}

\begin{IEEEbiography}[{\includegraphics[width=1in,height=1.25in,clip,keepaspectratio]{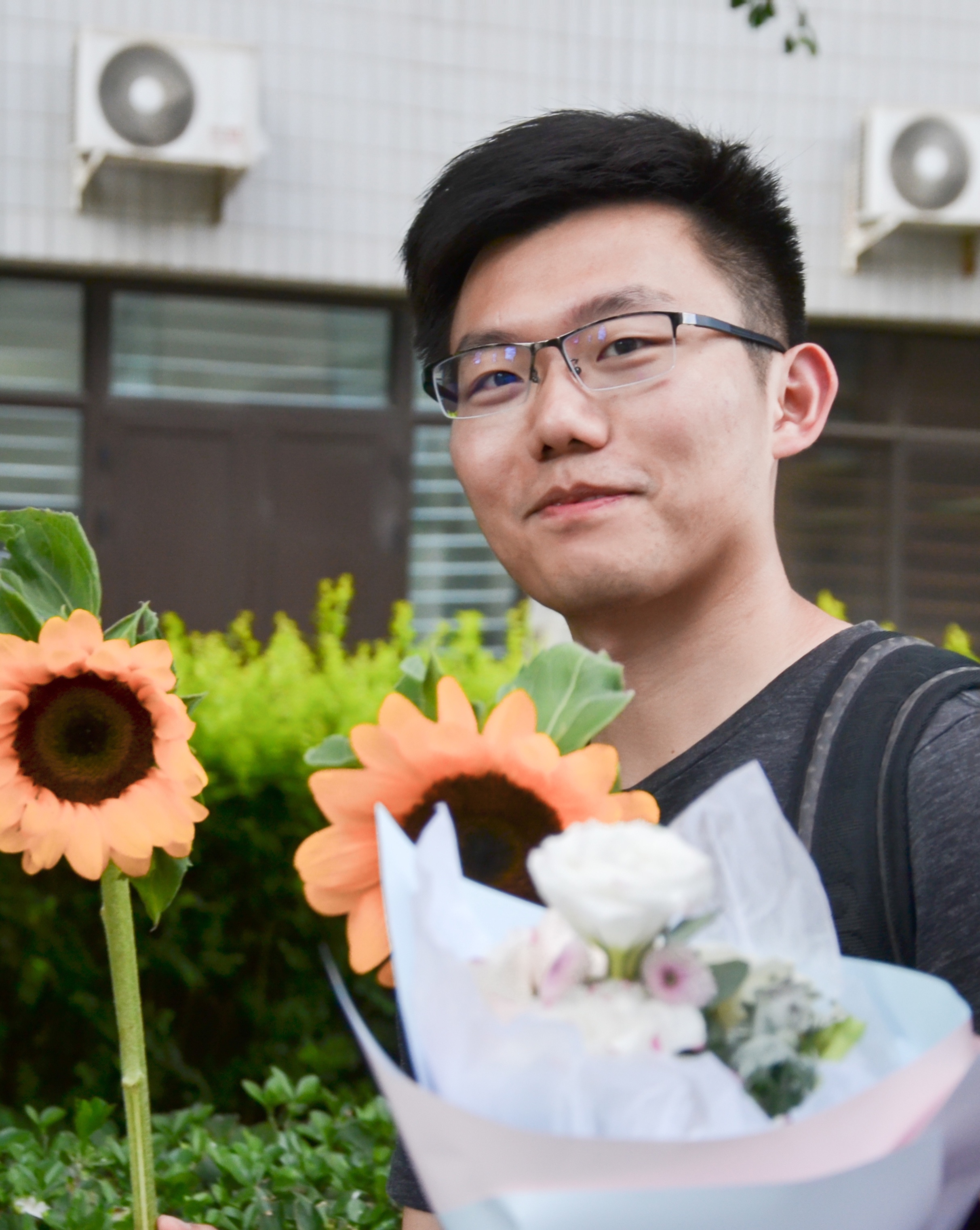}}]{Minghui Xu} received the BS degree in Physics from the Beijing Normal University, Beijing, China, in 2018, and the PhD degree in Computer Science from The George Washington University, Washington DC, USA, in 2021. He is currently an Assistant Professor in the School of Computer Science and Technology, Shandong University, China. His current research focuses on blockchain, distributed computing,  and quantum computing.
\end{IEEEbiography}

\begin{IEEEbiography}[{\includegraphics[width=1in,height=1.25in,clip,keepaspectratio]{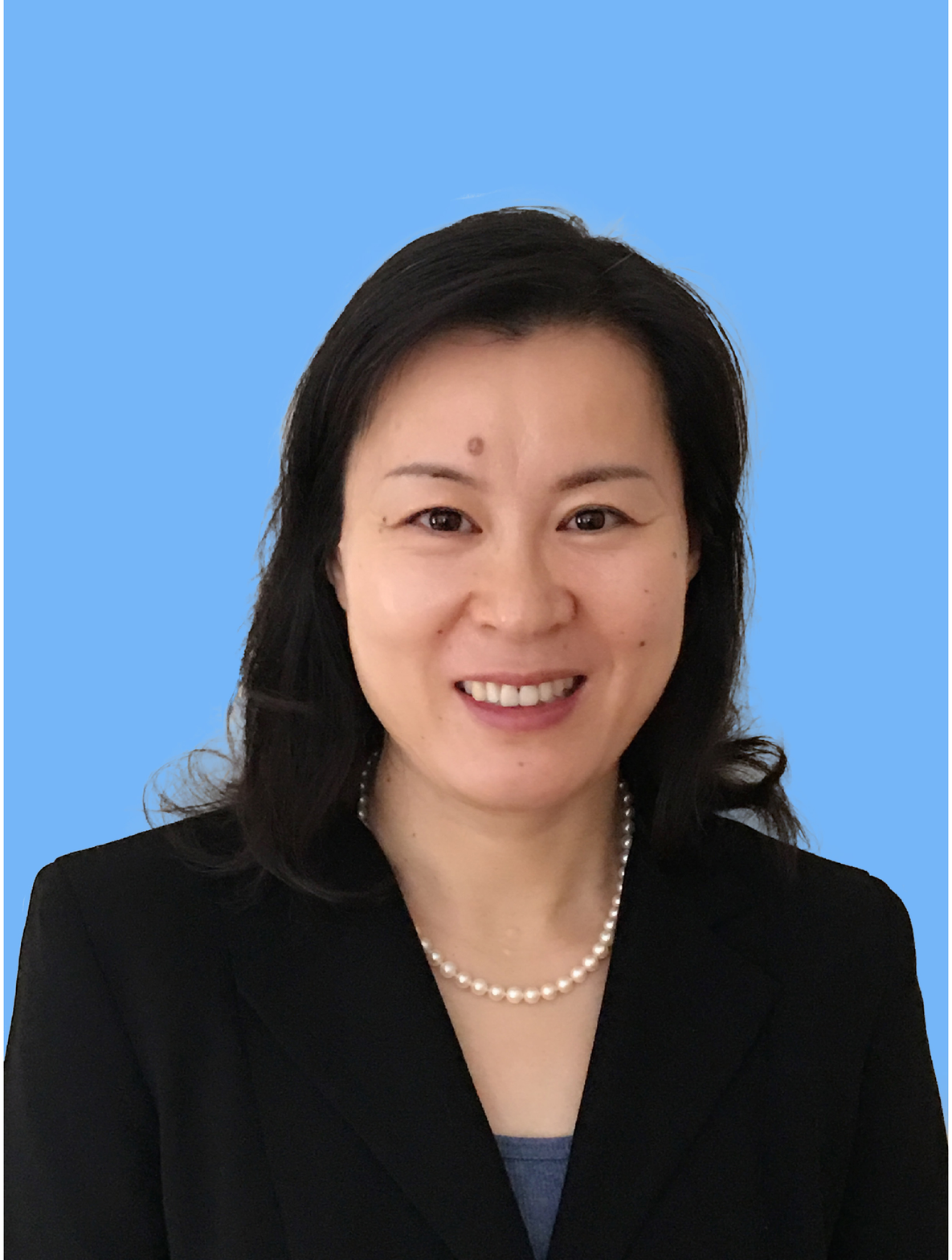}}]{Xiuzhen Cheng}
received her M.S. and Ph.D.
degrees in computer science from the University of
Minnesota Twin Cities in 2000 and 2002. She is currently a
professor in the School
of Computer Science and Technology, Shandong University, China. 
Her current research interests focus on privacy-aware
computing, wireless and mobile security, dynamic
spectrum access, mobile handset networking systems
(mobile health and safety), cognitive radio networks,
and algorithm design and analysis. She has served
on the Editorial Boards of several technical publications and the Technical Program Committees of various professional
conferences/workshops. She has also chaired several international conferences.
She worked as a program director for the U.S. National Science Foundation
(NSF) from April to October 2006 (full time), and from April 2008 to May
2010 (part time). She published more than 300 peer-reviewed papers.
\end{IEEEbiography}


%

\end{document}